\titleformat{\section}{\large\bfseries\filcenter}{\thesection}{1em}{}
\titleformat{\subsection}{\bfseries}{\thesubsection}{1em}{}
\patchcmd{\ttlh@hang}{\parindent\z@}{\parindent\z@\leavevmode}{}{}
\patchcmd{\ttlh@hang}{\noindent}{}{}{}
\newtheorem{thm}{Theorem}[section]
\newtheorem{cor}[thm]{Corollary}
\newtheorem{lemma}[thm]{Lemma}
\theoremstyle{remark}
\theoremstyle{definition}
\newtheorem{rmk}[thm]{Remark}
\newtheorem{defn}[thm]{Definition}
\newtheorem{example}[thm]{Example}
\numberwithin{equation}{section}
\renewcommand\thanks[1]{%
  \begingroup
  \renewcommand\thefootnote{}\footnote{#1}%
  \addtocounter{footnote}{-1}%
  \endgroup
}
\renewcommand{\tilde}{\widetilde}
\renewcommand{\epsilon}{{\varepsilon}}
\def\bearray{\begin{eqnarray}}
\def\earray{\end{eqnarray}}
\def\beq{\begin{equation}}
\def\eeq{\end{equation}}
\def\b0{{\bf 0}}
\def\R{\Omega}
\def\cA{{\cal A}}
\def\cB{{\cal B}}
\def\cR{{\cal R}}
\def\bC{{\mathbb C}}           
\def\bR{{\mathbb R}}
\def\C{{\mathbb C}}
\def\N{{\mathbb N}}
\def\R{{\mathbb R}}
\def\S{{\mathbb S}}
\def\hotimes{\hat{\otimes}_\epsilon}
\begin{document}

\begin{flushright}

\baselineskip=4pt

\end{flushright}

\begin{center}
\vspace{5mm}

{\Large\bf INJECTIVE TENSOR PRODUCTS \\[3mm] IN STRICT DEFORMATION QUANTIZATION}


\vspace{5mm}

{\bf by}

\vspace{5mm}
\noindent
{  \bf Simone Murro}\\[1mm]
\noindent  {\it Laboratoire de Math\'ematiques d’Orsay, Universit\'e Paris-Saclay, 
91405 Orsay,  France}\\[1mm]
email: \ {\tt simone.murro@u-psud.fr}
\\[6mm]

 { \bf Christiaan J.F.  van de Ven}\\[1mm]
\noindent  {\it Marie Sk\l odowska-Curie Fellow of the Istituto Nazionale di Alta Matematica}\\[1mm]
\noindent  {\it Dipartimento di Matematica, Universit\`a di Trento and INFN-TIFPA  I-38123 Povo, Italy}\\[1mm]
email: \ {\tt christiaan.vandeven@unitn.it}
\\[10mm]
\end{center}

\begin{abstract}
The aim of this paper is two-fold. Firstly we provide necessary and sufficient criteria for the existence of a strict deformation quantization of algebraic tensor products of Poisson algebras, and secondly we discuss the existence of products of KMS states.

As an application, we discuss the correspondence between quantum and classical Hamiltonians in spin systems and we provide a relation between the resolvent of Sch\"odinger operators for non-interacting many particle systems and quantization maps.
\end{abstract}

\paragraph*{Keywords:}  Strict deformation quantization, injective tensor product, minimal $C^*$-norm, \\resolvent algebras, quantum spin system, Heisenberg model, Ising model, Curie-Weiss model.
\paragraph*{MSC 2010: } Primary: 46L65, 81R15; Secondary: 	46L06, 82B20.
\\[0.5mm]

\section{Introduction}
The concept of {\em strict deformation quantization} has been introduced by Rieffel in\cite{Rie89} in order to provide a mathematical formalism that describes the transition from a classical theory to a quantum theory in terms of deformations of (commutative) Poisson algebras (representing the classical theory) into non-commutative $C^*$-algebras (characterizing the quantum theory). 
More precisely, given a commutative $C^*$-algebra $A_0$ the strict deformation quantization of $A_0$ consists of the assignment of a continuous bundle  $\cA$ of $C^*$-algebras $(A_\hbar)_{\hbar\in I}$ over an interval $I$ along with a family of quantization maps $Q_\hbar:\tilde{A}_0\to A_\hbar$, with $\hbar \in I$ and $\tilde{A}_0\subset A_0$ a dense Poisson subalgebra of $A_0$, which rules the deformation of $A_0$ (cf. Definition~\ref{def:deformationq}). 
Once that a quantum theory is constructed, the classical counterpart is obtained by performing the so-called \emph{classical limit}, i.e. $\hbar\to 0$ (see \cite{Lan17,MV2,Ven21} for a rigorous construction). For sake of completeness, let us illustrate this with an example of the strict deformation quantization of a classical particle on the phase space $\R^{2n}$.

\paragraph{Quantization of a classical particle} 
The classical observables of a free particle on the phase space $\R^{2n}$ are encoded in the ring of continuous functions vanishing at infinity on this space, i.e. $C_0(\mathbb{R}^{2n})$, which in particular contains (a) commutative dense Poisson algebra(s). For convenience we take the simplest functional-analytic setting in which only smooth compactly supported functions $f\in C_c^{\infty}(\mathbb{R}^{2n})$ (with Poisson structure given by the natural symplectic form $\sum_{j=1}^ndp_j\wedge dq^j$) are quantized. In order to relate $C_c^{\infty}(\mathbb{R}^{2n})$ to a quantum theory described on some Hilbert space, one needs to deform $C_c^{\infty}(\mathbb{R}^{2n})$ into non-commutatative $C^*$-algebras exploiting a family of quantization maps. In this setting the family of quantization maps are given by
\begin{align*}
&Q_{\hbar}:  C_c^{\infty}(\mathbb{R}^{2n}) \to B_{\infty}(L^2(\mathbb{R}^n));\\
&Q_{\hbar}(f)=\int_{\mathbb{R}^{2n}} \frac{d^npd^nq}{(2\pi\hbar)^n} f(p,q) |\phi_{\hbar}^{(p,q)}\rangle \langle\phi_{\hbar}^{(p,q)}|,
\end{align*}
 where  $\hbar\in (0,1]$, $B_{\infty}(\mathcal{H})$ is the $C^*$-algebra of compact operators on the Hilbert space $\mathcal{H}=L^2(\mathbb{R}^n)$ with the usual Lebesgue measure $d^npd^nq$ and, for each point $(p,q)\in \mathbb{R}^{2n}$, the operator $|\phi_{\hbar}^{(p,q)}\rangle \langle \phi_{\hbar}^{(p,q)}|:L^2(\bR^n)\to L^2(\mathbb{R}^n)$ is defined as the orthogonal projection 
onto the linear span of the normalized wavefunctions $\phi_{\hbar}^{(p,q)}$ given, for $x\in\mathbb{R}^n$, by
\begin{align}
\phi_{\hbar}^{(p,q)}(x)=(\pi\hbar)^{-n/4}e^{-ipq/2\hbar}e^{-ipx/\hbar}e^{-(x-q)^2/2\hbar}\:, \quad \phi_{\hbar}^{(p,q)}\in L^2(\mathbb{R}). \label{schrcoherent}
\end{align}
The functions~\eqref{schrcoherent} are dubbed (Schr\"odinger) {\em coherent states}. In~\cite{Rie89,Rie94} Rieffel showed that the fibers  $A_0=C_0(\mathbb{R}^{2n})$, and $A_{\hbar}=B_{\infty}(\mathcal{H}) \ (\hbar\in (0,1])$ can be combined into a (locally non-trivial) {\em continuous bundle $A$ of $C^*$-algebras} over base space $I = [0, 1]$; the maps $Q_{\hbar}$ which are defined on the dense subspace $C_c^{\infty}(\mathbb{R}^{2n})\subset A_0$ are called {\em quantization maps}.\\ \medskip

As noticed by Landsman in\cite{Lan98,Lan17}, a continuous bundle of $C^*$- algebras provides a natural setting to describe models in quantum statistical mechanics. By interpreting the  semi-classical parameter as the number of particles of a system, namely $\hbar=1/N \in 1/\N\cup\{0\}$, the limit $N\to \infty$ provides the so-called {\em thermodynamic limit}, namely the density of the system $N/V$ is kept fixed, and the volume $V$ of the system sent to infinity, as well. This has been rigorously studied using operator algebras since the 1960s.  The limiting system constructed at the limit $N=\infty$ is typically quantum statistical mechanics in infinite volume.  In this setting the so-called {\em quasi-local} observables are studied: these give rise to a non-commutative continuous bundles of $C^*$-algebras, namely $A^{(q)}$, defined over the base space $I:=1/\N\cup \{0\}\subset[0,1]$ with fibers at $1/N$ given by a $N$-fold tensor product of a matrix algebra with itself. 
However, the limit $N\to\infty$ can also provide the relation between classical (spin) theories viewed as limits of quantum statistical mechanics. In this case the  {\em quasi-symmetric} (or {\em macroscopic}) observables are studied and these induce a commutative bundle of $C^*$-algebras denoted by $A^{(c)}$  which is defined over the same base space $I:=1/\N\cup \{0\}\subset[0,1]$ with exactly the same fibers at $1/N$ as the algebra $A^{(q)}$, but differ at $N=\infty$, i.e., $1/N=0$. 
 It is precisely the bundle $A^{(c)}$ which relates these (spin) systems to strict deformation quantization, since macroscopic observables are defined by (quasi-) symmetric sequences which in turn are induced by certain quantization maps. Again, these maps can be used to prove the existence of the classical limit for quantum spin systems which has particularly been done for mean-field quantum spin systems \cite{LMV,Ven20}.\medskip

As noticed for the first time by Rieffel in \cite{Rie89} non-commutative tori can be considered as a strict deformation quantization of ordinary tori with an appropriate Poisson structure. As a consequence it is reasonable to expect that any symplectic twisted group $C^*$-algebra (see e.g.~\cite{mu1,mu2}) can be seen as a strict deformation of ordinary manifold. But it is not clear if any ordinary (Poisson) manifold does admit a strict deformation quantization and having a general criterion for the existence of a strict deformation quantization still seems to be too far reaching.
 Let us remark that noncommutative geometry has many interesting applications in physical theory, like the quantum hall effect (see e.g.~\cite{qhe}) and abelian Chern-Simons theory (see e.g.~\cite{mu0}).
The aim of this paper is dual: on the one hand, we shall provide a sufficient criterion for the existence of a strict deformation quantization of algebraic product of Poisson algebras (cf. Theorem~\ref{thm:main}). On the other hand, we shall prove that  the products of KMS states is still a KMS state (cf. Theorem~\ref{thm:kms}). As a direct consequence of Theorem~\ref{thm:main} we show that given two locally compact Poisson manifold $X$ and $Y$, which admit strict deformation quantization over the interval $I=1/\N \cup\{0\}$, also the Poisson manifold $X\times Y$ does so (cf. Corollary~\ref{cor:XxY}). 
\medskip

The paper is structured as follows. In the 2nd section, we fix our notation and we recall some results from the theory of operator algebras.  Section 3 and 4 are the core of the paper where the main result are obtained.  Finally in Section 5  we discuss some applications of our main results to spin systems and resolvent algebra.

\subsection*{Acknowledgments}
We are grateful to Federico Bambozzi, Francesco Fidaleo, Klaas Landsman, Valter Moretti and Teun van Nuland for helpful discussions related to the topic of this paper. We are grateful to the referee for the useful comments.

\subsection*{Funding}
S.M is supported by the the DFG research grant MU 4559/1-1 ``Hadamard States in Linearized Quantum Gravity'' and thanks the support of the INFN-TIFPA project ``Bell'' and the University of Trento during the initial stages of this project. The second author is funded by the `INdAM Doctoral Programme in Mathematics and/or Applications' co-funded by Marie Sklodowska-Curie Actions, INdAM-DPCOFUND-2015, grant number 713485.

\subsection*{Statements and Declarations}

The authors have no competing interests to declare that are relevant to the content of this article.

\section{Preliminaries}\label{sec:preliminaries}
In this section we collect the basic facts and conventions concerning operator algebras and strict deformation quantization of  Poisson algebras. For a detailed introduction the reader may consult~\cite{Lan98,Lan17,OA1}.\medskip

\subsection{The injective tensor product of continuous bundles of $C^*$-algebras}\label{sec:prel inj ten}

In this section, we shall collect basic facts about injective tensor products of continuous bundles of $C^*$-algebra. We begin by recasting the definition of continuous bundle of $C^*$-algebras.

\begin{defn}\label{def:continuous algebra bundle}
A \emph{bundle of} $C^*$-\emph{algebras} over a locally compact Hausdorff space $I$ is a triple $\cA:=(I,A,\pi_\hbar:A\to A_\hbar)$, where $A$ is a $C^*$-algebra (the bundle $C^*$-algebra) and, for each $\hbar \in I$, $\pi_\hbar$ is a
$*$-epimorphism of $A$ onto a $C^*$-algebra $A_\hbar$ such that:
\begin{itemize}
\item[(i)] the family $\{\pi_\hbar | \hbar \in I \}$ is faithful, i.e. $\|a \|= \sup_{\hbar\in I} \|\pi_\hbar(a)\|_\hbar$  for each $\hbar\in I$ and $\|\cdot\|$ (resp. $\|\cdot\|_\hbar$) denote the $C^*$-norm of $A$ (resp. $A_\hbar$);
\item[(ii)] there exist an action $\rho: C_0(I) \times A \to A$ such that $\pi_\hbar(\rho(f , a)) = f(\hbar) \pi_\hbar(a)$ for any $\hbar\in I$.
\end{itemize} 
A continuous bundle of $C^*$-algebras is a $C^*$-bundle $\cA=(I,A,\pi_\hbar)$ which also satisfies
\begin{itemize}
\item[(iii)] for $a \in A$, the norm function $N(a): \hbar \mapsto \|\pi_\hbar(a)\|_\hbar$ is in $C_0(I)$. 
\end{itemize}
A continuous {\bf section} of the bundle is an element $\{a_{\hbar}\}_{\hbar\in I}$ of $\Pi_{\hbar\in I}A_\hbar$ for which there exists an $a\in A$ such that $a_\hbar=\pi_\hbar(a)$ for each $\hbar\in I$.
It is not requested that the $C^*$-algebras $A_\hbar$ are unital. 
If all the $A_\hbar$ are instead unital, then  also $A$ is assumed to be unital and $\pi_\hbar$ is supposed to be 
unit-preserving.
\end{defn} 

\begin{rmk}\label{pointwise}

Notice that, since the $\pi_\hbar$ are homomorphisms of $C^*$-algebras,  the $*$-algebra operations in $A$ correspond to the corresponding pointwise operations of the sections $I\ni \hbar \mapsto \pi_\hbar(a)$. Condition (ii) reinforce the linearity preservation condition permitting coefficients continuously depending on $\hbar$.
\end{rmk}

As explained in the introduction of ~\cite{Kirchberg-Wassermann}, Definition~\ref{def:continuous algebra bundle} is equivalent to the classical definition of a continuous field of $C^*$-algebras ~\cite[Definition 10.3.1]{Dixmier}. 
Indeed we can identify $A$ with the $*$-algebra of elements $\gamma$ in the cartesian product $\Pi_{\hbar \in I}A_\hbar$ for which there is an $a \in A$ with $\gamma_\hbar = \pi_\hbar(a)$ for $\hbar\in I$.
If $\Gamma$ is the $*$-algebra of elements of $\Pi_{\hbar \in I} A_\hbar$ which coincide on compact subsets of $I$ with elements of $A$, the triple $(I, A, \pi_\hbar)$ is a continuous field of $C^*$-algebras in the sense of \cite{Dixmier}, and the subset of continuous functions vanishing at infinity $C_0(\Gamma) $ equals $A$. Conversely, if $(I, A, \pi_\hbar)$ is a continuous field of $C^*$-algebras on $I$ and $A$ is the $*$-algebra of $\gamma\in\Gamma$ 
such that the function $\hbar\mapsto \|\gamma_\hbar\|$ is in $C_0(I)$, then $A$ is a $C^*$-algebra and
$(I, A, \pi_\hbar: A \to A_\hbar)$ is a
continuous bundle in the sense of Definition~\ref{def:continuous algebra bundle}, with $A = C_0(\Gamma)$. \medskip

If $\cA$ and $\cB$ are continuous bundles of $C^*$-algebras there exists a natural bundle $\cA\otimes \cB$ over $I$ with bundle algebras given by the algebraic tensor product $A\otimes B$. Clearly $\cA\otimes \cB$ is not a bundle of $C^*$-algebras since the algebraic tensor product $A\otimes B$ is only a pre-$C^*$-algebra. Therefore, a suitable completion of $A\otimes B$ has to be performed to obtain a $C^*$-algebra.
A natural strategy is to embed $A\otimes B$ as a $*$-subalgebra of algebra of bounded operators $B(\mathcal{H})$ for some Hilbert space $\mathcal{H}$: The norm of an element in $A\otimes B$ will then be the
operator norm of the associated bounded operator. The resulting norm on $A\otimes B$ is usually dubbed \emph{injective tensor norm} (or \emph{spatial norm} or \emph{minimal $C^*$-norm}) and we will denote it as $\|\cdot\|_\epsilon$.
We summarize the above discussion in the following theorem and we refer to \cite{OA1} for more details.
\begin{thm}[\protect{\cite[Theorem B.9]{OA1}}]\label{thm:inject tens norm}
Let $A$ and $B$ be $C^*$-algebras and consider two faithful representations $\pi_A: A \to B(\mathcal{H}_A)$ and $\pi_B:B \to B(\mathcal{H}_B)$. Then it holds:
\begin{itemize}
\item[-] There exists a unique $*$-homomorphism $\pi_A\otimes \pi_B: A\otimes B \to B(\mathcal{H}_A \otimes \mathcal{H}_B)$ such that $\pi_A\otimes \pi_B(a\otimes b)=\pi_A(a)\otimes \pi_B(b)$;
\item[-] The  $C^*$-norm $\|\cdot\|_\epsilon$ on $A\otimes B$  defined by 
$$\|\sum_{i=1}^k a_i\otimes b_i \|_\epsilon := \|\sum_{i=i}^k \pi_A(a_i)\otimes \pi_B(b_i) \|_{B(\mathcal{H}_A \otimes \mathcal{H}_B)} $$ 
 does not depend on the
choice of representations and it is a cross-norm, i.e. for all $a_i\in A$ and $b_i\in B$ it holds 
\begin{equation}
\label{eq:cross norm}\|a_i\otimes b_i\|_\epsilon = \|a_i\|_A \|b_i\|_B\,
\end{equation}
 where $\|\cdot\|_A$ and $\|\cdot\|_B$ are the $C^*$-norm of $A$ and $B$ respectively.
\end{itemize}   

\begin{defn}\label{def:inject tens prod}
Given two $C^*$-algebras $A$ and $B$, we call \emph{injective tensor product} of $A$ and $B$  the completion $A\hotimes B$ of $A\otimes B$ with respect to the injective tensor norm $\|\cdot\|_\epsilon$.
\end{defn}
\end{thm}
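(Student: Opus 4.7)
The plan is to address the three assertions of Theorem~\ref{thm:inject tens norm} in sequence.

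\textbf{Existence and uniqueness of $\pi_A\otimes\pi_B$.} First I would invoke the universal property of the algebraic tensor product applied to the bilinear map $(a,b)\mapsto \pi_A(a)\otimes\pi_B(b)\in B(\mathcal{H}_A\otimes\mathcal{H}_B)$. This provides a unique linear extension $\pi_A\otimes\pi_B\colon A\otimes B\to B(\mathcal{H}_A\otimes\mathcal{H}_B)$ with the prescribed action on elementary tensors. Verifying that this extension respects both multiplication and involution then reduces to checking the identities $(\pi_A(a_1)\otimes\pi_B(b_1))(\pi_A(a_2)\otimes\pi_B(b_2))=\pi_A(a_1a_2)\otimes\pi_B(b_1b_2)$ and $(\pi_A(a)\otimes\pi_B(b))^*=\pi_A(a^*)\otimes\pi_B(b^*)$ on elementary tensors, followed by bilinear extension.

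\textbf{Cross-norm property.} For $a\in A$, $b\in B$ one direction $\|\pi_A(a)\otimes\pi_B(b)\|\le\|\pi_A(a)\|\,\|\pi_B(b)\|$ is the elementary submultiplicativity of the operator norm on the Hilbert tensor product. For the reverse inequality, for each $\delta>0$ I would pick unit vectors $\xi\in\mathcal{H}_A$ and $\eta\in\mathcal{H}_B$ with $\|\pi_A(a)\xi\|\ge\|\pi_A(a)\|-\delta$ and $\|\pi_B(b)\eta\|\ge\|\pi_B(b)\|-\delta$, and compute
\[
\|(\pi_A(a)\otimes\pi_B(b))(\xi\otimes\eta)\| = \|\pi_A(a)\xi\|\,\|\pi_B(b)\eta\|.
\]
Sending $\delta\to 0$ yields the opposite bound. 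Since faithful $*$-homomorphisms of $C^*$-algebras are automatically isometric, $\|\pi_A(a)\| = \|a\|_A$ and $\|\pi_B(b)\| = \|b\|_B$, producing the identity~\eqref{eq:cross norm}.

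\textbf{Independence from the choice of faithful representation.} This is the subtle part of the statement. The approach is to prove that the norm $\|\cdot\|_{\pi_A\otimes\pi_B}$ coincides, for any faithful pair $(\pi_A,\pi_B)$, with the unique smallest $C^*$-norm on $A\otimes B$, the so-called minimal (spatial) $C^*$-norm. To this end I would reduce arbitrary vector states of $\pi_A\otimes\pi_B$ to combinations of product states $\phi\otimes\psi$ with $\phi\in S(A)$, $\psi\in S(B)$, via Schmidt-type decompositions of unit vectors in $\mathcal{H}_A\otimes\mathcal{H}_B$ and slice-map arguments. Conversely, every such $\phi\otimes\psi$ is a vector state of the tensor product of GNS representations $\pi_\phi\otimes\pi_\psi$, and these are sub-representations of any faithful $\pi_A\otimes\pi_B$. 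Once the norm is identified with an intrinsic quantity defined purely in terms of $A$, $B$ and their state spaces, independence from the specific realisation is immediate.

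The main obstacle is precisely this last identification: the representation-independence is not accessible by direct manipulation on elementary tensors but rests on Takesaki's nontrivial theorem characterising the minimal $C^*$-norm as the smallest $C^*$-norm on $A\otimes B$. A complete proof requires finer Hahn-Banach/slice-map technology together with the weak-$*$ density of vector states in the state space of the faithful representation, which is where the bulk of the work lies.
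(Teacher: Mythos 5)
The paper does not actually prove this statement: it is quoted verbatim from \cite[Theorem B.9]{OA1} with the remark ``we refer to \cite{OA1} for more details,'' so there is no in-paper argument to match your proposal against. Judged on its own terms, your outline follows the standard textbook route and the first two steps are complete and correct: the universal property gives the unique linear extension, the $*$-homomorphism property is checked on elementary tensors, and the cross-norm identity \eqref{eq:cross norm} follows from submultiplicativity on one side, near-maximizing unit vectors $\xi\otimes\eta$ on the other, and the automatic isometry of injective $*$-homomorphisms. One small omission there: to know that $\|\cdot\|_\epsilon$ is a norm rather than a seminorm you should also note that $\pi_A\otimes\pi_B$ is injective on the algebraic tensor product, which follows from the injectivity of $\pi_A$ and $\pi_B$ as linear maps.

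For the third step your strategy (identify $\|(\pi_A\otimes\pi_B)(\cdot)\|$ with an intrinsic quantity built from product states and GNS representations, using Schmidt decompositions and the weak-$*$ density of convex combinations of vector states) is the correct and standard one, but as written it is a roadmap rather than a proof, and it contains one conceptual misattribution: representation-independence does \emph{not} rest on Takesaki's theorem that the spatial norm is the smallest $C^*$-norm on $A\otimes B$. The independence is established directly, and more cheaply, by showing that for any faithful pair $(\pi_A,\pi_B)$ one has $\|(\pi_A\otimes\pi_B)(x)\|=\sup\{\|(\pi_\phi\otimes\pi_\psi)(x)\|\colon \phi\in S(A),\ \psi\in S(B)\}$, a quantity defined without reference to the chosen representations; Takesaki's minimality theorem is a further and genuinely harder statement that is not needed for Theorem~\ref{thm:inject tens norm}. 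If you carry out the product-state identification you sketch, the argument closes without invoking minimality at all.
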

\begin{example}\label{ex:inject tensor prod}
There are some basic examples where the injective tensor product of two $C^*$-algebras takes a familiar form. When one algebra is commutative, for example, we can identify the injective tensor product with an algebra of complex-valued functions. If $X$ is a locally compact Hausdorff space and $A$ is a $C^*$-algebra, then the ring  $C_0(X, A)$ of continuous functions 
$f : X \to A$  such that $x \mapsto \|f(x)\|$ vanishes at infinity is a $C^*$-algebra with pointwise operations and the supremum norm:
$$fg (x)=f(x)g(x) \qquad f^*(x)=f(x)^* \qquad \|f\|_0=\sup_{x\in X} \|f(x)\| \,.$$
As shown in \cite[Corollary B.17,]{OA1} if $X$ and $Y$ are locally compact Hausdorff spaces, then there is an isomorphism $\psi$ of $C_0(X)\hotimes C_0(Y)$ onto $C_0(X\times Y)$ such that $\psi(f\otimes g)(x,y) = f(x)g(y)$ for every $f\in C_0(X)$ and $g\in C_0(Y)$.
\end{example}
\noindent Replacing the algebraic tensor product $\cA\otimes \cB$ with the injective tensor product $\cA\hotimes\cB$, we thus obtain a bundle of $C^*$-algebras but this bundle is only lower-semicontinuous as shown by Kirchberg and Wasserman in~\cite[Proposition 4.9]{Kirchberg-Wassermann}.  A sufficient criterium for continuity is obtained in \cite[Remark 2.6.1]{Kirchberg-Wassermann}, by combining \cite[Lemma 2.4 and 2.5]{Kirchberg-Wassermann}. We recall the result for sake of completeness.
\begin{lemma}[\protect{\cite[Remark 2.6.1]{Kirchberg-Wassermann}}]\label{lem:nucl}
Let $\cA = (I,A, \pi_\hbar:A\to A_\hbar)$ and $\cB = (I,B, \sigma_\hbar:B\to B_\hbar)$ be
continuous bundles of $C^*$-algebras. If for every $\hbar\in I $ the algebras $A_\hbar$ and $B_\hbar$ are nuclear $C^*$-algebras, then $\cA \hotimes \cB$ is a continuous bundle of $C^*$-algebras.
\end{lemma}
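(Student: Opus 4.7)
\medskip

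\noindent\textbf{Proof plan.}
The plan is to verify the three defining conditions of Definition~\ref{def:continuous algebra bundle} for the triple $(I,\, A\hotimes B,\, \pi_\hbar\otimes\sigma_\hbar)$. Conditions (i) and (ii) I would dispatch first, as they are essentially structural: by Theorem~\ref{thm:inject tens norm} the $*$-epimorphism $\pi_\hbar\otimes\sigma_\hbar$ defined on the algebraic tensor product extends to the injective completion; faithfulness of the family follows from the cross-norm identity~\eqref{eq:cross norm} applied to elementary tensors combined with faithfulness of $\{\pi_\hbar\}$ and $\{\sigma_\hbar\}$; and the required $C_0(I)$-action is inherited diagonally from either tensor factor and extended by continuity. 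The substantive content therefore lies in condition (iii): I must show that $N(z):\hbar\mapsto \|(\pi_\hbar\otimes\sigma_\hbar)(z)\|_\epsilon$ lies in $C_0(I)$ for every $z\in A\hotimes B$. By a standard $\epsilon/3$-approximation together with density of the algebraic tensor product in $A\hotimes B$, this reduces to the case of finite sums $z=\sum_{i=1}^n a_i\otimes b_i$, which I would then handle via two separate semicontinuity estimates.

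For \emph{lower semicontinuity}---which requires no nuclearity---I would fix $\hbar_0\in I$ and $\delta>0$, choose faithful representations $\rho^A_0:A_{\hbar_0}\to B(H^A)$ and $\rho^B_0:B_{\hbar_0}\to B(H^B)$ together with unit vectors in $H^A\otimes H^B$ almost realizing $\|(\pi_{\hbar_0}\otimes\sigma_{\hbar_0})(z)\|_\epsilon$, and then lift the corresponding GNS data to functionals on the total algebras $A,B$ via $\pi_{\hbar_0},\sigma_{\hbar_0}$. Since the sections $\hbar\mapsto \pi_\hbar(a_i)$ and $\hbar\mapsto\sigma_\hbar(b_i)$ are norm-continuous by assumption, a routine perturbation estimate on the resulting matrix coefficients yields $\|(\pi_\hbar\otimes\sigma_\hbar)(z)\|_\epsilon \ge \|(\pi_{\hbar_0}\otimes\sigma_{\hbar_0})(z)\|_\epsilon-\delta$ on a neighborhood of $\hbar_0$.

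For \emph{upper semicontinuity}, nuclearity plays the decisive role. By nuclearity of each fiber the injective tensor norm on $A_\hbar\otimes B_\hbar$ coincides with the maximal $C^*$-tensor norm, so $\|(\pi_\hbar\otimes\sigma_\hbar)(z)\|_\epsilon=\|(\pi_\hbar\otimes\sigma_\hbar)(z)\|_{\max}$ for every $\hbar\in I$. At the bundle level, the universal property of $\otimes_{\max}$ endows $A\otimes_{\max} B$ with a $C_0(I)$-algebra structure whose fiber at $\hbar$ is canonically $A_\hbar\otimes_{\max} B_\hbar$; upper semicontinuity of $\hbar\mapsto \|(\pi_\hbar\otimes\sigma_\hbar)(z)\|_{\max}$ follows immediately from this factorization, and transfers to the injective norm by the fiberwise equality just established. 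Combining the two semicontinuity statements gives continuity, while the vanishing at infinity of $N(z)$ follows from~\eqref{eq:cross norm} for elementary tensors together with density for general $z$.

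The main obstacle is the upper-semicontinuity step: one has to carefully align the fiberwise identity $\|\cdot\|_\epsilon=\|\cdot\|_{\max}$ coming from nuclearity with the global bundle structure of $A\hotimes B$ versus $A\otimes_{\max} B$, and verify that the canonical quotient $A\otimes_{\max} B \to A_\hbar\otimes_{\max} B_\hbar$ induces the correct fiber norm on the injective tensor bundle. Once this identification is in place, the rest of the argument reduces to a straightforward chain of inequalities.
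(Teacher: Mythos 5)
The paper offers no proof of this lemma: it is imported verbatim from Kirchberg--Wassermann, where it arises by combining their Lemmas 2.4 and 2.5 (lower semicontinuity of the minimal tensor product, upper semicontinuity of the maximal one, glued by fiberwise nuclearity). Your architecture --- reduce to finite sums $z=\sum_i a_i\otimes b_i$ by an $\epsilon/3$ argument, prove lower semicontinuity without nuclearity, prove upper semicontinuity via $\otimes_{\max}$ and the fiberwise identity $\|\cdot\|_\epsilon=\|\cdot\|_{\max}$ --- is exactly that argument, and your upper-semicontinuity half is sound: right-exactness of $\otimes_{\max}$ identifies the fiber of the $C_0(I)$-algebra $A\otimes_{\max}B$ at $\hbar$ with $A_\hbar\otimes_{\max}B_\hbar$ (using $\ker\pi_\hbar=C_0(I\setminus\{\hbar\})A$), and norm functions of $C_0(I)$-algebras are automatically upper semicontinuous.

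The gap is in your lower-semicontinuity step. The functionals you obtain by lifting GNS data through $\pi_{\hbar_0}$ and $\sigma_{\hbar_0}$ annihilate $\ker\pi_{\hbar_0}$ and $\ker\sigma_{\hbar_0}$; they do not factor through $\pi_\hbar\otimes\sigma_\hbar$ for $\hbar\neq\hbar_0$, so they yield no lower bound on $\|(\pi_\hbar\otimes\sigma_\hbar)(z)\|_\epsilon$. Norm-continuity of the sections $\hbar\mapsto\pi_\hbar(a_i)$ controls norms, not matrix coefficients in a representation tied to the fiber at $\hbar_0$, so ``a routine perturbation estimate'' does not close this. What is actually needed is lower semicontinuity of the field of state spaces: write the near-optimal matrix coefficient in the form $(\phi_0\otimes\psi_0)(y^*z^*zy)/(\phi_0\otimes\psi_0)(y^*y)$ with $\phi_0\in S(A_{\hbar_0})$, $\psi_0\in S(B_{\hbar_0})$ and $y$ in the algebraic tensor product, and then produce, for $\hbar$ near $\hbar_0$, states $\phi_\hbar$ of $A_\hbar$ and $\psi_\hbar$ of $B_\hbar$ whose values on the finitely many relevant sections approximate those of $\phi_0,\psi_0$. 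This approximation is a genuine (if standard) consequence of continuity of $\cA$ and $\cB$ and is precisely the content of Kirchberg--Wassermann's Lemma 2.4; without it the lower bound does not follow. A smaller point: your verification of condition (i) via the cross-norm identity only gives $\sup_\hbar\|\pi_\hbar(a)\|\,\|\sigma_\hbar(b)\|$, which can be strictly smaller than $\|a\|\,\|b\|$ when the two norm functions peak at different points of $I$; the correct bundle algebra is the image of $A\hotimes B$ in $\prod_\hbar \bigl(A_\hbar\hotimes B_\hbar\bigr)$, i.e.\ the restriction to the diagonal of $I\times I$ --- a point the paper itself elides.
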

\begin{rmk}
Clearly assuming that $A_\hbar$ and $B_\hbar$ are nuclear is a sufficient but not a necessary condition. On account of \cite[Theorem 4.6]{Kirchberg-Wassermann} one can even take one bundle to nuclear.
\end{rmk}
A sufficient and necessary condition however was provided by Archbold in \cite{Archbold}.
\begin{thm}[\protect{\cite[Theorem 3.3]{Archbold}}]\label{Archbold}
Let $\cA = (I,A, \pi_\hbar:A\to A_\hbar)$ and $\cB = (I,B, \sigma_\hbar:B\to B_\hbar)$ be
continuous bundles of $C^*$-algebras. Then for each $\hbar \in I$, the function $\hbar \mapsto \|(\pi_\hbar \otimes \sigma_\hbar )(c)\|_\hbar$ is continuous for all $c\in A\hotimes B$ at $\hbar$   if and only if 
$$\ker(\pi_\hbar \otimes \sigma_\hbar ) = \ker(\pi_\hbar ) \hotimes B + A \hotimes \ker(\sigma_\hbar)\,.$$ 
\end{thm}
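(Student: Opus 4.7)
Set $K_\hbar := \ker(\pi_\hbar \otimes \sigma_\hbar)$ and $J_\hbar := \ker(\pi_\hbar)\hotimes B + A \hotimes \ker(\sigma_\hbar)$. The inclusion $J_\hbar \subseteq K_\hbar$ is immediate, and $J_\hbar$ is closed, being the sum of two closed two-sided ideals in a $C^*$-algebra. My plan rests on the fact, implicit in the analysis of Kirchberg--Wassermann (cf.\ the discussion preceding Lemma~\ref{lem:nucl}), that the function $N_c(\hbar) := \|(\pi_\hbar \otimes \sigma_\hbar)(c)\|_\hbar$ is always \emph{lower} semi-continuous in $\hbar$ for every $c \in A \hotimes B$. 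Consequently, continuity of $N_c$ at $\hbar$ is equivalent to its upper semi-continuity at $\hbar$, and the task becomes characterising this upper semi-continuity in terms of the kernel decomposition.

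For the direction $(\Leftarrow)$, assume $K_\hbar = J_\hbar$. Given $c \in A \hotimes B$ and $\varepsilon > 0$, the quotient-norm identity $N_c(\hbar) = \inf\{\|c-k\| : k \in K_\hbar\}$ together with the hypothesis produces $j = j_1 + j_2 \in J_\hbar$, with $j_1 \in \ker(\pi_\hbar)\hotimes B$ and $j_2 \in A \hotimes \ker(\sigma_\hbar)$, such that $\|c - j\| < N_c(\hbar) + \varepsilon/2$. I then approximate each $j_i$, up to error $\varepsilon/4$, by a finite elementary-tensor sum whose first (respectively second) factor lies in $\ker(\pi_\hbar)$ (respectively $\ker(\sigma_\hbar)$); the cross-norm identity of Theorem~\ref{thm:inject tens norm} combined with the continuity of the bundles $\cA$ and $\cB$ forces $\|(\pi_{\hbar'} \otimes \sigma_{\hbar'})(j_i)\|_{\hbar'} \to 0$ as $\hbar' \to \hbar$. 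Combining this with $\|(\pi_{\hbar'} \otimes \sigma_{\hbar'})(c-j)\|_{\hbar'} \le \|c-j\|$ via the triangle inequality gives $\limsup_{\hbar' \to \hbar} N_c(\hbar') \le N_c(\hbar) + \varepsilon$; letting $\varepsilon \downarrow 0$ completes the upper semi-continuity.

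For the direction $(\Rightarrow)$, I would introduce the auxiliary function $\tilde N_c(\hbar) := \mathrm{dist}(c, J_\hbar)$, the quotient norm in $(A \hotimes B)/J_\hbar$. A calculation parallel to the one above, but now using that each generator $\rho_A(\phi)a$ of $\ker(\pi_\hbar)$ with $\phi(\hbar) = 0$ is approximated by $\rho_A(\phi - \phi(\hbar'))a \in \ker(\pi_{\hbar'})$, shows that $\tilde N_c$ is always upper semi-continuous at $\hbar$. Since the kernel equality $K_\hbar = J_\hbar$ is equivalent to the natural surjection $(A \hotimes B)/J_\hbar \twoheadrightarrow A_\hbar \hotimes B_\hbar$ being isometric, it is in turn equivalent to $N_c(\hbar) = \tilde N_c(\hbar)$ for every $c$. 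Supposing, for a contradiction, the existence of $c \in K_\hbar \setminus J_\hbar$, one has $N_c(\hbar) = 0 < \delta := \tilde N_c(\hbar)$; from this one has to manufacture discontinuity of the norm function of some element at $\hbar$. The key is to combine the upper semi-continuity of $\tilde N_c$ with the continuity hypothesis on $N_c$, approximate $c$ in the injective tensor norm by finite elementary-tensor sums, and use a slice-map extraction to exhibit an element whose fibre norm function jumps upwards away from $\hbar$, violating continuity.

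The main obstacle is the last step of the $(\Rightarrow)$ direction: converting the pointwise continuity of every $N_c$ at the single point $\hbar$ into the global statement $K_\hbar = J_\hbar$. The subtlety is that the inequality $N_c \le \tilde N_c$, together with the fact that $\tilde N_c$ is upper semi-continuous and $N_c$ is continuous at $\hbar$, does \emph{not} by abstract order-theoretic reasoning alone force equality at $\hbar$; one genuinely needs to exploit the $C^*$-structure on the quotients, the factoring of states through fibres of the individual bundles $\cA$ and $\cB$, and the slice-map characterisation of the injective tensor product in order to propagate a hypothetical gap $N_c(\hbar) < \tilde N_c(\hbar)$ into a detectable discontinuity, as originally carried out in \cite{Archbold}.
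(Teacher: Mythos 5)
First, a point of reference: the paper does not prove this statement at all --- it is quoted verbatim as \cite[Theorem 3.3]{Archbold} --- so your proposal must stand on its own. Its $(\Leftarrow)$ half does: reducing continuity to upper semi-continuity via the lower semi-continuity of $N_c(\hbar)=\|(\pi_\hbar\otimes\sigma_\hbar)(c)\|_\hbar$ (which is indeed the content of the Kirchberg--Wassermann result the paper cites), identifying $N_c(\hbar)=\mathrm{dist}(c,K_\hbar)$ (the induced injection $(A\hotimes B)/K_\hbar\hookrightarrow A_\hbar\hotimes B_\hbar$ is isometric), and killing a finite elementary-tensor approximation of an element of $\ker(\pi_\hbar)\hotimes B+A\hotimes\ker(\sigma_\hbar)$ by the cross-norm property and bundle continuity is a complete and correct argument for $\limsup_{\hbar'\to\hbar}N_c(\hbar')\le N_c(\hbar)$.

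The $(\Rightarrow)$ half, however, is not a proof but a correct diagnosis of where the difficulty sits, followed by an explicit deferral to \cite{Archbold}. You establish (in outline) that $\tilde N_c(\hbar)=\mathrm{dist}(c,J_\hbar)$ is upper semi-continuous and that $N_c\le\tilde N_c$, and you rightly observe that these facts together with continuity of $N_c$ at $\hbar$ do \emph{not} force $N_c(\hbar)=\tilde N_c(\hbar)$. The missing ingredient is exactly the nontrivial inequality $\mathrm{dist}(c,J_\hbar)\le\limsup_{\hbar'\to\hbar}N_c(\hbar')$ (Archbold's Theorem 3.2 in the cited paper), which, applied to a hypothetical $c\in K_\hbar\setminus J_\hbar$ with $0=N_c(\hbar)<\mathrm{dist}(c,J_\hbar)$, immediately exhibits a discontinuity of $N_c$ itself at $\hbar$; no auxiliary ``slice-map extraction'' of a second element is needed once that identity is in hand. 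Proving that inequality is the genuinely hard step: it requires the description of the quotient norm of $A\hotimes B$ modulo a sum of ideals $I\hotimes B+A\hotimes J$ in terms of product states and representations factoring through $A/I$ and $B/J$, together with the fact that such states at the fibre over $\hbar$ are limits of product states living over nearby $\hbar'$. None of this appears in your proposal beyond a gesture toward ``the $C^*$-structure on the quotients,'' so the forward implication remains a plan with an acknowledged hole at its only substantive step.
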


\subsection{Strict deformation quantization}

A Poisson algebra is a real (or complex) algebra endowed with a Poisson bracket, i.e. a skew-symmetric bilinear map $\{\cdot, \cdot\}:A\times A \to A$ which satisfies Jacobi identity and Leibniz rule. 
If the algebra is endowed with an involution, i.e. $A$ is a $*$-algebra, we additional demand that, for every $f,g\in A$, it holds $\{f, g\}^* = \{f^* , g^* \}$. We now give the definition of a strict deformation quantization. 

\begin{defn}\label{def:deformationq}
A {\it strict deformation quantization} of a Poisson algebra $\tilde{A}_0$ densely contained in a commutative $C^*$-algebra $A_0$ consists of:
 \begin{itemize}
\item[(I)]  A {continuous  bundle of unital $C^*$-algebras} $\cA:=(I,A,\pi_\hbar:A\to A_{\hbar})$,  (with norms $||\cdot||_{\hbar}$) where $I$ is an subset of $\mathbb{R}$ containing $0$ as accumulation point;
 \item[(II)]  A collection of  linear $*$-preserving {\it quantization maps}, namely a family $Q:=\{Q_\hbar\}_{\hbar\in I}$ of maps $Q_{\hbar}:\tilde{A}_0  \to A_{\hbar}$ such that: 
\begin{enumerate}
\item[(i)] $Q_0$ is the inclusion map $\tilde{A}_0 \hookrightarrow A_0$ and $Q_{\hbar}(\mathrm{1}_{A_0})=\mathrm{1}_{A_{\hbar}}$ (the unit of $A_{\hbar}$);
\item[(ii)] Each $Q_{\hbar}$ is self-adjoint,  i.e. $Q_{\hbar}({f}^*) = Q_{\hbar}(f)^*$;
\item[iii)] For each $f\in\tilde{A}_0$ the following cross-section of the bundle is continuous:
\begin{align*}
&0\mapsto f;\\
&\hbar\mapsto Q_{\hbar}(f), \  \ (\hbar\in (I\setminus\{0\}));
\end{align*}
\item[(iv)]  Each pair $f,g\in \tilde{A}_0$ satisfies the Dirac-Groenewold-Rieffel condition:
\begin{align*}
\lim_{\hbar\to 0}||\frac{i}{\hbar}[Q_{\hbar}(f),Q_{\hbar}(g)]-Q_{\hbar}(\{f,g\})||_{\hbar}=0.
\end{align*}
\end{enumerate}
\end{itemize}
\end{defn}

\begin{rmk}
Notice that Definition~\ref{def:deformationq} generalizes the classical definition of strict deformation quantization of a Poisson manifold $X$ (see e.g.\cite[Definition 7.1]{Lan17}). Indeed, once that a Poisson structure is defined on a dense $C^*$-subalgebra            $\tilde{A}_0$  of the algebra of continuous functions vanishing at infinity $C_0(X)$, it is easy to check that $A_0:=C_0(X)$ is a $C^*$-algebra with $C^*$-norm given by the supremum norm.
\end{rmk}

\begin{rmk}
If one requires the quantization maps $Q_{\hbar}$ to be injective for each $\hbar$ and that $Q_{\hbar}(\tilde{A}_0)$ is a dense $*$-subalgebra of $A_{\hbar}$ (for each $\hbar\in I$), then the previous definition defines a {\em strict deformation quantization} in the sense of \cite[Definition 1.1.2]{Lan98}.  If one requires that the base space $I$ is discrete or such that  $A_{\hbar}$ are identical for each $\hbar\neq0$ then the quantization maps in Definition \ref{def:deformationq} uniquely define this bundle \cite[Theorem 1.2.4]{Lan98}.
\end{rmk}


\begin{example}\label{ex:S2}
As an example, we consider the strict deformation quantization of Poisson manifold $\S^2$ whose Poisson bracket on $C^\infty(\S^2)$ is defined by 
\begin{align*}
\{f,g\}({\bf x}) := \sum_{a,b,c =1}^{3}\epsilon_{abc}x_c \frac{\partial f}{\partial x_a}\frac{\partial g}{\partial x_b}\:, \quad {\bf x}\in \S^{2},\:
\end{align*}
where $\epsilon_{abc}$ is the Levi-Civita symbol. To construct a continuous bundle of unital $C^*$-algebras, we set $I:=1/\N \cup \{0\}$ and ew consider the family of $C^*$-algebras 
$$ A_\hbar:=\begin{cases}
 C(\S^2) & \text{ for }\hbar=0 \\
 \textnormal{Mat}_{n+1}(\mathbb{C}) & \text{ for }\hbar\in 1/\N
  \end{cases} $$
 where $n:=1/\hbar$ and Mat$_{n+1}(\mathbb{C})$ denotes the space of $(n+1)\times (n+1)$-complex matrices. Let now set $\tilde{A}_0$ to be the algebra of polynomials in three real variables restricted to $\S^2$. Clearly, $\tilde{A}_0$ is a dense Poisson sub-algebra of $C^\infty(\S^2)$ whose Poisson bracket is defined by restricting the Poisson bracket of $\S^2$. Now let $Q_\hbar:\tilde{A}_0\to A_\hbar$ be the map defined by
 \begin{align}\label{defquan3}
Q_{\hbar}(P)& :=
 \frac{1/\hbar +1}{4\pi}\int_{S^2}P({\bf x})\, |{\bf x}\rangle\langle{\bf x}|_\frac{1}{\hbar} \,d\mu_{\bf x}\:,
\end{align}
where $d\mu_{\bf x}$ indicates the unique $SO(3)$-invariant Haar measure on ${\S}^2$ with $\int_{{\S}^2} d\mu_{\bf x} = 4\pi$  and $|{\bf x}\rangle\langle{\bf x}|_{1/\hbar}\in B(\text{Sym}^{1/\hbar}(\mathbb{C}^2))\simeq M_{1/\hbar+1}(\mathbb{C})$ is the projection onto the linear span of the unit vector
$x_{1/\hbar}$ (we refer to~\cite{VGRL18,LMV} for further details on $\text{Sym}^{1/\hbar}(\mathbb{C}^2)$). As explained in more details in the proof of~\cite[Theorem 8.1]{Lan17}, the  $C^*$-algebra $\tilde A$ consisting of 
$$ \pi_\hbar(a):=\begin{cases} 
f & \text{for } \hbar=0\\
Q_\hbar(f)  & \text{for } \hbar \in 1/\N
\end{cases}$$
 for every $f\in C(\S^2)$ is a continuous bundle of $C^*$-algebras and $Q_\hbar$ defines a quantization map which satisfies Properties $(i)-(iv)$ of Definition~\ref{def:deformationq}.
\end{example}

\begin{rmk}
 Let us remark that the quantization maps $Q_\hbar$ constructed in Example~\ref{ex:S2} define a so-called {\it Berezin quantization}, see e.g. \cite{Lan98} and that, in physics literature, the unit vector $x_{1/\hbar}$ are called coherent spin states, see e.g~\cite{Pe72}.
\end{rmk}

\section{Products of Poisson algebras}

Let $A$ and $B$ two Poisson commutative $*$-algebras (densely contained in commutative $C^*$-algebras $\bar{A}$ and $\bar{B}$, respectively) and assume that there exists a strict deformation quantization of $A$ and $B$ respectively.
The aim of this section is to provide a necessary and sufficient criteria for the existence of a strict deformation quantization of the {algebraic} tensor product {$A\otimes B$}. We start by showing that {$A\otimes B$} is a dense Poisson $*$-subalgebra of $\bar{A}\hotimes \bar{B}$.
%
%
\begin{lemma}\label{lem:prod poiss alg}
Let $A$ and $B$ be dense Poisson $*$-subalgebras of commutative $C^*$-algebras $\bar{A}$ and $\bar{B}$ respectively. Then there exists a Poisson structure on  $A\otimes B$  and $A\otimes B$ is dense in $\bar A\hotimes \bar B$.
\end{lemma}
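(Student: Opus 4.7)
The plan is to handle the two assertions separately.

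For the Poisson structure on $A \otimes B$, I would define the bracket on simple tensors by the natural extension of the brackets on $A$ and $B$, namely
$$
\{a_1 \otimes b_1,\, a_2 \otimes b_2\} := \{a_1,a_2\}_A \otimes b_1 b_2 \;+\; a_1 a_2 \otimes \{b_1,b_2\}_B,
$$
and extend bilinearly to all of $A \otimes B$. Since the right-hand side is separately linear in each of the four arguments $a_1, b_1, a_2, b_2$, this extension is well-defined on the algebraic tensor product. Next, I would verify the four axioms of a Poisson $*$-bracket on simple tensors, extending by bilinearity: skew-symmetry is immediate from the skew-symmetry of $\{\cdot,\cdot\}_A$ and $\{\cdot,\cdot\}_B$ together with the commutativity of $A$ and $B$; the Leibniz rule reduces, after expanding $\{a \otimes b, (a'\otimes b')(a''\otimes b'')\}$, to the Leibniz rule in each factor plus the commutativity of the product; the Jacobi identity reduces similarly to the Jacobi identities in $A$ and in $B$; and the condition $\{f,g\}^* = \{f^*, g^*\}$ follows because the formula above is compatible with the involution $(a \otimes b)^* = a^* \otimes b^*$ and the corresponding property in each factor. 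These computations are routine bookkeeping.

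For the density statement, I would argue in two steps. First, by definition of the injective tensor product (Definition~\ref{def:inject tens prod}), the algebraic tensor product $\bar A \otimes \bar B$ is dense in $\bar A \hotimes \bar B$. Second, I claim that $A \otimes B$ is dense in $\bar A \otimes \bar B$ with respect to $\|\cdot\|_\epsilon$. Indeed, for any simple tensor $\bar a \otimes \bar b$ with $\bar a \in \bar A$ and $\bar b \in \bar B$, and any $\varepsilon > 0$, density of $A$ in $\bar A$ and of $B$ in $\bar B$ produces $a \in A$, $b \in B$ such that $\|\bar a - a\|_{\bar A}$ and $\|\bar b - b\|_{\bar B}$ are arbitrarily small. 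Using the cross-norm property \eqref{eq:cross norm} and a standard telescoping estimate
$$
\|\bar a \otimes \bar b - a \otimes b\|_\epsilon \leq \|\bar a - a\|_{\bar A}\, \|\bar b\|_{\bar B} + \|a\|_{\bar A}\, \|\bar b - b\|_{\bar B},
$$
this can be made smaller than $\varepsilon$. Extending to finite sums by the triangle inequality gives the claim, and combining the two steps yields density of $A \otimes B$ in $\bar A \hotimes \bar B$.

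I do not anticipate any serious obstacle: the Poisson verification is entirely formal once the formula is written down, and the density argument relies only on the definition of $\hotimes$ and the cross-norm property of $\|\cdot\|_\epsilon$ recalled in Theorem~\ref{thm:inject tens norm}. The only mildly delicate point is to make sure the Leibniz and Jacobi verifications use commutativity at the right places; both simplify because $A$ and $B$ are commutative, so order of products in either factor is irrelevant.
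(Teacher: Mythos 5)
Your proposal is correct and follows essentially the same route as the paper: the same bracket formula $\{a_1\otimes b_1, a_2\otimes b_2\} = \{a_1,a_2\}_A\otimes b_1b_2 + a_1a_2\otimes\{b_1,b_2\}_B$ on simple tensors, and the same two-step density argument ($A\otimes B$ dense in $\bar A\otimes\bar B$ in the cross-norm, which is dense in $\bar A\hotimes\bar B$ by definition). You simply spell out the axiom verifications and the telescoping estimate that the paper leaves implicit.
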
 

\begin{proof}
Let $A\otimes B$ the algebraic tensor product of $A$ and $B$.
For any $f_1\otimes f_2, g_1\otimes g_2 \in A\otimes B$ the map $\{ \cdot,\cdot\}_{\otimes}$ defined  by
\begin{equation}\label{eq:Poiss bracket}
 \{ f_1 \otimes f_2 , g_1 \otimes g_2\}_{\otimes} := \{f_1,g_1\}_{A}  \otimes f_2g_2 + f_1g_1 \otimes \{f_2,g_2\}_{B}\,,
 \end{equation}
where $\{\cdot,\cdot\}_A$ and $\{\cdot,\cdot\}_B$ denotes the Poisson bracket on $A$ and $B$ respectively, is a Poisson bracket on $A\otimes B$. 

To conclude our proof we need to show that $A\otimes B$ is dense in $\bar A \hotimes \bar B$.  But this follows immediately because $A\otimes B$ is dense (in the cross norm $\|\cdot\|_\epsilon$) in $\bar A \otimes \bar B$ which is dense in $\bar A \hotimes \bar B$. 
\end{proof}

\begin{cor}\label{cor:poisson man}
Let $X$ and $Y$ be locally compact Poisson manifolds. Then there exists a Poisson structure on the manifold $X\times Y$.
\end{cor}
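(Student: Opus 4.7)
The plan is to deduce Corollary~\ref{cor:poisson man} directly from Lemma~\ref{lem:prod poiss alg} by picking, on each factor, a canonical dense Poisson $*$-subalgebra of $C_0(X)$ and $C_0(Y)$, namely $C_c^\infty(X)$ and $C_c^\infty(Y)$ equipped with the Poisson brackets coming from the Poisson bivectors $\pi_X$ and $\pi_Y$. These sit densely in $C_0(X)$ and $C_0(Y)$ respectively (via the supremum norm), and their canonical brackets make them Poisson $*$-algebras in the sense of Section~\ref{sec:preliminaries}.

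First I would apply Lemma~\ref{lem:prod poiss alg} with $A = C_c^\infty(X)$, $\bar A = C_0(X)$, and similarly for $Y$. This yields a Poisson bracket on the algebraic tensor product $C_c^\infty(X) \otimes C_c^\infty(Y)$ via formula \eqref{eq:Poiss bracket}, with $C_c^\infty(X) \otimes C_c^\infty(Y)$ dense in $C_0(X) \hotimes C_0(Y)$. Invoking the isomorphism from Example~\ref{ex:inject tensor prod}, $C_0(X) \hotimes C_0(Y) \cong C_0(X \times Y)$, this dense subalgebra is identified with the subalgebra of finite sums $\sum_i f_i(x)g_i(y)$ sitting inside $C_c^\infty(X \times Y) \subset C_0(X \times Y)$.

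Next I would identify the abstract bracket on $C_c^\infty(X) \otimes C_c^\infty(Y)$ with the geometric bracket on $X \times Y$ induced by the bivector
\[
 \pi_{X \times Y} := \mathrm{pr}_X^* \pi_X + \mathrm{pr}_Y^* \pi_Y
\]
under the natural splitting $T_{(x,y)}(X \times Y) = T_x X \oplus T_y Y$. On pure tensors $f_1 \otimes f_2, g_1 \otimes g_2$ the product bracket computes to
\[
 \{f_1 g_2, g_1 g_2\}_{X \times Y}(x,y) = \{f_1,g_1\}_X(x)\, f_2(y)g_2(y) + f_1(x)g_1(x)\, \{f_2,g_2\}_Y(y),
\]
which is precisely \eqref{eq:Poiss bracket}. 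Since both brackets are biderivations, they agree on all of $C_c^\infty(X) \otimes C_c^\infty(Y)$; this dense subalgebra separates points of $X \times Y$, so by a Stone--Weierstrass / derivation extension argument the geometric bracket on $C^\infty(X \times Y)$ restricts to the abstract one. Finally, smoothness and the Jacobi identity $[\pi_{X \times Y}, \pi_{X \times Y}] = 0$ for $\pi_{X \times Y}$ follow from $[\pi_X,\pi_X] = [\pi_Y,\pi_Y] = 0$ and the vanishing of the mixed Schouten bracket of $\mathrm{pr}_X^* \pi_X$ with $\mathrm{pr}_Y^* \pi_Y$, since these bivectors depend on disjoint sets of coordinates.

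The only subtle step is the last one: verifying that the Poisson bracket constructed algebraically on a dense subalgebra is actually realized by a smooth bivector field on the product manifold. I expect this to be the main (mild) obstacle, but it reduces to the standard fact that the direct sum of Poisson bivectors on a product manifold is Poisson, which follows from the Schouten--Nijenhuis calculus.
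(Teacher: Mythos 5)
Your proposal is correct and follows essentially the same route as the paper: apply Lemma~\ref{lem:prod poiss alg} to the dense Poisson $*$-subalgebras of smooth functions on each factor, identify $C_0(X)\hotimes C_0(Y)$ with $C_0(X\times Y)$ via \cite[Corollary B.17]{OA1}, and take the product bracket to be the sum of the two partial brackets. The only differences are that you additionally verify, through the Schouten--Nijenhuis calculus for $\mathrm{pr}_X^*\pi_X+\mathrm{pr}_Y^*\pi_Y$, that this bracket is realized by a smooth Poisson bivector on $X\times Y$ (a check the paper leaves implicit), and there is a harmless typo in your displayed formula, where $\{f_1 g_2, g_1 g_2\}$ should read $\{f_1\otimes f_2,\, g_1\otimes g_2\}$.
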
 
\begin{proof}
Since $C_0^\infty(X)$ (resp. $C_0^\infty(Y)$) is a dense Poisson  $*$-subalgebra of $C_0(X)$ (resp. $C_0(Y)$), by Lemma~\ref{lem:prod poiss alg} it follows that $C_0^\infty(X)\otimes C_0^\infty (Y)$ is a Poisson algebras densely contained in $C_0(X) \hotimes C_0(Y)$. By \cite[Corollary B.17]{OA1}  we obtain that $C_0(X) \hotimes C_0(Y)\simeq C_0(X\times Y)$ and we can define a Poisson bracket on $C_0^\infty(X\times Y)$ by declaring
$$ \{ f, g\}_{C_0^\infty(X\times Y)} := \{f(\cdot,y),g(\cdot,y)\}_{C_0^\infty(X)} + \{f(x,\cdot),g(x,\cdot)\}_{C_0^\infty(Y)}\,.$$
 This concludes our proof.
\end{proof}

With the next theorem we shall provide a criterion for the existence of a strict deformation quantization of the algebraic tensor product $\tilde{A}_0\otimes \tilde{B}_0$, where $\tilde{A}_0$ and $\tilde{B}_0$ are assumed to admit a strict deformation quantization in the sense of Definition \ref{def:deformationq}.

\begin{thm}\label{thm:main}
Let $\tilde{A}_0$ and $\tilde{B}_0$ be Poisson $*$-algebras densely contained in commutative $C^*$-algebras $A_0$ and $B_0$ respectively and assume that $\tilde{A}_0$ and $\tilde{B}_0$ admit a strict deformation quantization in the sense of Definition~\ref{def:deformationq}. Denote with $\cA=(I,A,\pi_\hbar)$ (resp. $\cB=(I,B,\sigma_\hbar)$) the continuous bundle of $C^*$-algebras and with $Q_\hbar^A$ (resp $Q_\hbar^B$) the quantization map for $\tilde{A}_0$ (resp. for $\tilde{B}_0$). Then there exists a strict deformation quantization of $\tilde{A}_0\otimes \tilde{B}_0$ over the interval $I$  with a quantization map given by $Q_\hbar:=Q_\hbar^A\otimes Q_\hbar^B$
if and only if for every $\hbar \in I$ 
\begin{equation}\label{eq:iff}
\ker(\pi_\hbar \otimes \sigma_\hbar ) = \ker(\pi_\hbar ) \hotimes B + A \hotimes \ker(\sigma_\hbar)\,.
\end{equation}
\end{thm}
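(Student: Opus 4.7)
The natural candidate for the continuous bundle underlying the sought strict deformation quantization is $\cA\hotimes\cB$, with fibers $A_\hbar\hotimes B_\hbar$ and projections $\pi_\hbar\otimes\sigma_\hbar$ extended by continuity to the injective tensor product. By Lemma~\ref{lem:prod poiss alg}, $\tilde{A}_0\otimes\tilde{B}_0$ is a dense Poisson $*$-subalgebra of the commutative $C^*$-algebra $A_0\hotimes B_0$, while by Theorem~\ref{Archbold} the kernel condition~\eqref{eq:iff} is equivalent to the continuity of $\cA\hotimes\cB$. The plan is therefore to show that, once continuity is granted, the maps $Q_\hbar:=Q_\hbar^A\otimes Q_\hbar^B$ satisfy conditions (i)--(iv) of Definition~\ref{def:deformationq}, and conversely that any failure of~\eqref{eq:iff} obstructs condition (iii).

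For the direction ($\Leftarrow$), conditions (i) and (ii) follow at once from the analogous properties of $Q_\hbar^A$ and $Q_\hbar^B$ on elementary tensors and extend linearly to $\tilde{A}_0\otimes\tilde{B}_0$. For (iii), fix continuous sections $a_f\in A$ and $b_g\in B$ with $\pi_\hbar(a_f)=Q_\hbar^A(f)$ and $\sigma_\hbar(b_g)=Q_\hbar^B(g)$ for all $\hbar$ (which exist because the original quantization maps satisfy (iii)). Then for any $\sum_i f_i\otimes g_i\in\tilde{A}_0\otimes\tilde{B}_0$ the section $\hbar\mapsto Q_\hbar\bigl(\sum_i f_i\otimes g_i\bigr)$ coincides with $(\pi_\hbar\otimes\sigma_\hbar)\bigl(\sum_i a_{f_i}\otimes b_{g_i}\bigr)$, which is a continuous section of $\cA\hotimes\cB$.

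The main computation is the Dirac-Groenewold-Rieffel condition (iv). Using the algebraic identity $[a\otimes b,\,c\otimes d]=ac\otimes[b,d]+[a,c]\otimes db$ together with~\eqref{eq:Poiss bracket}, and writing $\hat f:=Q_\hbar^A(f)$, $\hat g:=Q_\hbar^B(g)$, one obtains
\begin{align*}
&\tfrac{i}{\hbar}\bigl[Q_\hbar(f\otimes g),Q_\hbar(f'\otimes g')\bigr]-Q_\hbar\bigl(\{f\otimes g,f'\otimes g'\}_\otimes\bigr)\\
&\quad=\hat f\hat f'\otimes\Bigl(\tfrac{i}{\hbar}[\hat g,\hat g']-Q_\hbar^B(\{g,g'\}_B)\Bigr)+\bigl(\hat f\hat f'-Q_\hbar^A(ff')\bigr)\otimes Q_\hbar^B(\{g,g'\}_B)\\
&\qquad+\Bigl(\tfrac{i}{\hbar}[\hat f,\hat f']-Q_\hbar^A(\{f,f'\}_A)\Bigr)\otimes\hat g'\hat g+Q_\hbar^A(\{f,f'\}_A)\otimes\bigl(\hat g'\hat g-Q_\hbar^B(gg')\bigr).
\end{align*}
By the cross-norm property~\eqref{eq:cross norm}, each summand factorises in $\|\cdot\|_\epsilon$. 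In the first and third summands the factor containing the Dirac error vanishes as $\hbar\to 0$ by assumption on $\tilde{A}_0$ and $\tilde{B}_0$, while the other factor has a norm that is continuous (hence locally bounded) in $\hbar$. In the second and fourth summands the remaining factor is the difference of two continuous sections of $\cA$ (respectively, $\cB$) that coincide at $\hbar=0$, since $Q_0^A$ and $Q_0^B$ are inclusions into commutative algebras, so its norm tends to $0$.

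For the converse ($\Rightarrow$): if~\eqref{eq:iff} fails at some $\hbar_0$, Theorem~\ref{Archbold} produces $c\in A\hotimes B$ for which $\hbar\mapsto\|(\pi_\hbar\otimes\sigma_\hbar)(c)\|_\hbar$ is discontinuous at $\hbar_0$; approximating $c$ in the injective tensor norm by finite sums of the form $\sum_i a_{f_i}\otimes b_{g_i}$ with $f_i\in\tilde{A}_0$ and $g_i\in\tilde{B}_0$ transfers the discontinuity to the norm function of $\hbar\mapsto Q_\hbar\bigl(\sum_i f_i\otimes g_i\bigr)$, contradicting (iii). The main technical obstacle throughout is the bracket decomposition above and the bookkeeping needed to show each of the four error terms vanishes in the injective norm; the rest reduces to Theorem~\ref{Archbold} and standard manipulations with continuous sections.
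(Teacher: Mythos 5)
Your sufficiency direction follows the paper's proof essentially step for step: density of $\tilde{A}_0\otimes\tilde{B}_0$ via Lemma~\ref{lem:prod poiss alg}, continuity of $\cA\hotimes\cB$ via Theorem~\ref{Archbold}, and the same decomposition of the Dirac--Groenewold--Rieffel error into pieces controlled by the cross-norm property~\eqref{eq:cross norm}, the vanishing of $\|Q_\hbar(f)Q_\hbar(g)-Q_\hbar(fg)\|_\hbar$, and the continuity (hence local boundedness) of the remaining norm factors. That half is correct, and your verification of (iii) via the sections $a_f$, $b_g$ is if anything slightly cleaner than the paper's, which only treats the norm of an elementary tensor.

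The converse is where you diverge, and it contains a genuine gap. You take the element $c\in A\hotimes B$ whose norm function is discontinuous (supplied by Theorem~\ref{Archbold} when~\eqref{eq:iff} fails) and propose to approximate it by finite sums $\sum_i a_{f_i}\otimes b_{g_i}$ with $f_i\in\tilde{A}_0$, $g_i\in\tilde{B}_0$. But Definition~\ref{def:deformationq} does not require $Q^A_\hbar(\tilde{A}_0)$ to be dense in $A_\hbar$, nor the sections $a_f$ to be dense in (or generate) the bundle algebra $A$; that extra density is only part of the stronger notion recalled in the remark following the definition. So such approximants of $c$ need not exist, and the discontinuity cannot be transferred to a quantization section. (The uniform-limit step itself is fine: a uniform limit of continuous norm functions is continuous, so \emph{if} the approximants existed, one of them would already have a discontinuous norm function.) A second issue is that even a discontinuous norm function for $\sum_i a_{f_i}\otimes b_{g_i}$ would only contradict continuity of the specific bundle $\cA\hotimes\cB$, whereas condition (iii) constrains sections of whatever bundle underlies the hypothetical quantization of $\tilde{A}_0\otimes\tilde{B}_0$, which is part of the data and is not a priori $\cA\hotimes\cB$. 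The paper closes both loopholes at once by a different route: it argues that the injective tensor product is the minimal $C^*$-completion of $A\otimes B$, so any admissible bundle for $Q^A_\hbar\otimes Q^B_\hbar$ forces $\cA\hotimes\cB$ to be a continuous bundle (as demanded by condition (I)), and then the ``only if'' half of Theorem~\ref{Archbold} yields~\eqref{eq:iff}. You should either add the density hypotheses explicitly or replace the contradiction-with-(iii) argument by one that first pins down the bundle, as the paper does.
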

\begin{proof}
We begin by showing that condition~\eqref{eq:iff} is a sufficient criterion.
By Lemma~\ref{lem:prod poiss alg}, $\tilde{A}_0\otimes \tilde{B}_0$ are a dense Poisson $*$-subalgebra of $A_0 \hotimes B_0$. Furthermore, if condition~\eqref{eq:iff} is satisfied then by Theorem~\ref{Archbold} the bundle $\cA\hotimes\cB$ is continuous.

Now we check that the quantization map $Q_\hbar:=Q_\hbar^A\otimes Q_\hbar^B$ satisfies properties (i)-(iv) in Definition~\ref{def:deformationq}. By linearity of $Q_\hbar$ it suffices to check this on elementary tensors.
 \begin{itemize}
 \item[(i)]  $Q_0=Q_0^A\otimes Q_0^B$ is the inclusion map  and $Q_{\hbar}(\mathrm{1}_{A_0\otimes B_0})=\mathrm{1}_{A_{\hbar}}\otimes \mathrm{1}_{B_{\hbar}}$ which is the unit of $A_{\hbar}\hotimes B_\hbar$.
 \item[(ii)] For every $f\otimes g \in \tilde{A}_0\otimes \tilde{B}_0$  we have 
 $$Q_{\hbar}( ({f\otimes g})^*)=Q_\hbar^A\otimes Q_\hbar^B( {f}^*\otimes  {g}^*) = Q^A_{\hbar}(f^*)\otimes Q^B_{\hbar}(g^*) = Q^A_{\hbar}(f)^*\otimes Q^B_{\hbar}(g)^* =Q_{\hbar}(f\otimes g)^*\,,$$ where we used the fact that $Q_\hbar^A$ and $Q_\hbar^B$ are quantization maps. 
\item[(iii)] Since $Q^A_{\hbar}(f)$ and $Q_\hbar^B( g)$ are continuous section of $A_\hbar$ and $B_\hbar$ respectively for any $f \in\tilde{A}_0$ and $g \in \tilde{B}_0$, then the map
\begin{align*}
&0\mapsto f\otimes g;\\
&\hbar\mapsto Q_{\hbar}(f\otimes g)=Q^A_{\hbar}(f)\otimes Q_\hbar^B( g), \  \ (\hbar\in (I\setminus\{0\}))
\end{align*}
is a continuous section of $\cA\hotimes\cB$ by construction. Indeed, the following function is continuous:
$$ \hbar \mapsto \| \pi_\hbar ( Q_{\hbar}(f\otimes g))\|_{\hbar,\epsilon}= \| \pi_\hbar (Q^A_{\hbar}(f))\|_{\hbar} \, \| \pi_\hbar (Q^B_{\hbar}(g))\|_{\hbar}\,. $$
\item[(iv)]  Each pair $f_1\otimes g_1, f_2\otimes g_2 \in \tilde{A}_0\otimes \tilde{B}_0$  one has
\begin{align*}
[Q_{\hbar}(f_1\otimes g_1),Q_{\hbar}(f_2\otimes g_2)]=&[Q^A_{\hbar}(f_1)\otimes Q^B_{\hbar}(g_1), Q^A_{\hbar}(f_2)\otimes Q^B_{\hbar}(g_2)]=\\
=&[Q^A_{\hbar}(f_1), Q^A_{\hbar}(f_2)]\otimes  Q^B_{\hbar}(g_1)Q^B_{\hbar}(g_2) \\
&+ Q^A_{\hbar}(f_2)Q^A_{\hbar}(f_1)\otimes [Q^B_{\hbar}(g_1), Q^B_{\hbar}(g_2)]
\end{align*}
and
\begin{align*}
Q_{\hbar}(\{f_1\otimes g_1,f_2\otimes g_2\}_\otimes)=& Q_{\hbar}(\{f_1,f_2\}_A \otimes g_1 g_2 + f_1f_2 \otimes \{g_1 ,g_2 \}_B)=\\
=& Q^A_{\hbar}(\{f_1,f_2\}_A) \otimes Q^B_\hbar(g_1 g_2) + Q_\hbar^A(f_1f_2) \otimes Q^B_\hbar(\{g_1 ,g_2 \}_B)
\end{align*}
where we used Equation~\eqref{eq:Poiss bracket} and  $\{\cdot,\cdot\}_A$ (resp. $\{\cdot,\cdot\}_B$) denotes the Poisson bracket on $\tilde A_0$ (resp. $\tilde B_0$).
It then follows
\begin{align*}
\|\frac{i}{\hbar}[Q_{\hbar}(f_1\otimes g_1),& Q_{\hbar}(f_2\otimes g_2)]-Q_{\hbar}(\{f_1\otimes g_1,f_2\otimes g_2\})\|_{\hbar,\epsilon} \\
&\leq \|\frac{i}{\hbar} [Q^A_{\hbar}(f_1), Q^A_{\hbar}(f_2)]\otimes  Q^B_{\hbar}(g_1)Q^B_{\hbar}(g_2) - Q^A_{\hbar}(\{f_1,f_2\}_A) \otimes Q^B_\hbar(g_1 g_2) \|_{\hbar,\epsilon} \\
& + \|\frac{i}{\hbar} Q^A_{\hbar}(f_2)Q^A_{\hbar}(f_1)\otimes [Q^B_{\hbar}(g_1), Q^B_{\hbar}(g_2)] - Q_\hbar^A(f_1f_2) \otimes Q^B_\hbar(\{g_1 ,g_2 \}_B) \|_{\hbar,\epsilon}\,.
\end{align*}
 \end{itemize}
The first term in the above inequality can be estimated as follows: 
\begin{align*}
\lim_{\hbar\to 0} \|\frac{i}{\hbar} [Q^A_{\hbar}(f_1),& Q^A_{\hbar}(f_2)]\otimes  Q^B_{\hbar}(g_1)Q^B_{\hbar}(g_2) - Q^A_{\hbar}(\{f_1,f_2\}_A) \otimes Q^B_\hbar(g_1 g_2) \|_{\hbar,\epsilon}  \\
&= \lim_{\hbar\to 0} \Big\|\Big( \frac{i}{\hbar} [Q^A_{\hbar}(f_1), Q^A_{\hbar}(f_2)] - Q^A_{\hbar}(\{f_1,f_2\}_A) \Big)\otimes  Q^B_{\hbar}(g_1)Q^B_{\hbar}(g_2) \\
&\qquad \qquad- Q^A_{\hbar}(\{f_1,f_2\}_A) \otimes  \Big( Q^B_\hbar(g_1 g_2) 
-  Q^B_\hbar(g_1)  Q^B_\hbar(g_2)  \Big)
\Big\|_{\hbar,\epsilon}
\\
 & \leq \lim_{\hbar\to 0}\|\frac{i}{\hbar} [Q^A_{\hbar}(f_1), Q^A_{\hbar}(f_2)] - Q^A_{\hbar}(\{f_1,f_2\}_A) \|_\hbar \| Q^B_{\hbar}(g_1)Q^B_{\hbar}(g_2)\|_\hbar \\
& \qquad\qquad+\| Q^A_{\hbar}(\{f_1,f_2\}_A) \|_\hbar \| Q^B_\hbar(g_1 g_2) 
-  Q^B_\hbar(g_1) Q^B_\hbar(g_2) \|_{\hbar} \to 0
\end{align*}
 where we used Equation~\eqref{eq:cross norm} together with
\begin{align*} \lim_{\hbar\to 0}\|Q_{\hbar}(f)\|_\hbar=\|f\|_{0}, \qquad \text{ and } \qquad
\lim_{\hbar\to 0}\|Q_{\hbar}(f)Q_{\hbar}(g)-Q_{\hbar}(fg)\|_\hbar=0 \,,
\end{align*}
which follows from the definition of a continuous bundle of $C^*$-algebras.
 Using a similar argument we obtain
$$ \lim_{\hbar\to 0} \|\frac{i}{\hbar} Q^A_{\hbar}(f_2)Q^A_{\hbar}(f_1)\otimes [Q^B_{\hbar}(g_1), Q^B_{\hbar}(g_2)] - Q_\hbar^A(f_1f_2) \otimes Q^B_\hbar(\{g_1 ,g_2 \}_B) \|_{\hbar,\epsilon}\to 0. $$

Since given two $C^*$-algebras, $A$ and 
$B$, $A\hotimes B$ is the smallest $C^*$-algebra containing $A\otimes B$, it follows that $\cA \hotimes \cB$ is the smallest bundle of $C^*$-algebras containing $\cA\otimes \cB$. Therefore if there exists another tensor product $\otimes_C$ which makes $A\hat\otimes_C B$ a $C^*$-algebras, $\cA \hotimes \cB$ is contained in $\cA\hat\otimes_C\cB$. Since condition~\ref{eq:iff} is a sufficient and necessary condition to make $\cA\hotimes \cB$ continuous (cf. Theorem~\ref{Archbold}), we can conclude.
\end{proof}

As explained in Section~\ref{sec:prel inj ten}, given two continuous bundle of $C^*$-algebras $\cA$ and $\cB$ over $I$, the injective tensor product $\cA\hotimes \cB$ is not continuous in general. However for $I=1/\N \cup \{0\}$, $\cA\hotimes\cB$ is a continuous bundle.
\begin{cor}\label{cor:I special}
Assume the setup of Theorem~\ref{thm:main}. If $I:=1/\N\cup \{0\}$ then there always exists a strict deformation quantization of $\tilde{A}_0\hotimes \tilde{B}_0$ over $I$.
\end{cor}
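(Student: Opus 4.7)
The plan is to invoke Theorem~\ref{thm:main}, reducing the problem to verifying the kernel identity~\eqref{eq:iff} at every $\hbar \in I = 1/\N \cup \{0\}$. The topology of $I$ splits the verification cleanly into two cases: the isolated points $1/N$ and the unique accumulation point $0$.

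For the isolated points $\hbar = 1/N$, I would exploit that $\{1/N\}$ is clopen in $I$, so that its characteristic function $\chi_{\{1/N\}}$ lies in $C_0(I)$. Via the $C_0(I)$-action of Definition~\ref{def:continuous algebra bundle}(ii), this indicator corresponds to a central projection in the multiplier algebra of $A$, which yields a direct sum decomposition $A \cong A_{1/N} \oplus \ker(\pi_{1/N})$; the analogous decomposition holds for $B$. Since the injective tensor product distributes over $C^*$-algebraic direct summands, $A \hotimes B$ splits into four pieces and one reads off
\[ \ker(\pi_{1/N} \otimes \sigma_{1/N}) = \ker(\pi_{1/N}) \hotimes B + A \hotimes \ker(\sigma_{1/N}) \]
by inspection.

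For the point $\hbar = 0$, the key observation is that both fibers $A_0$ and $B_0$ are commutative, hence nuclear. Nuclearity of $A_0$ forces $A_0 \hotimes B = A_0 \otimes_{\mathrm{max}} B$, so that tensoring the short exact sequence $0 \to \ker(\pi_0) \to A \to A_0 \to 0$ with $B$ via the maximal tensor product (which always preserves exactness) descends, through the canonical quotient map from max to min, to an exact sequence
\[ 0 \to \ker(\pi_0) \hotimes B \to A \hotimes B \xrightarrow{\pi_0 \otimes \mathrm{id}} A_0 \hotimes B \to 0. \]
An identical argument, applied to the exact sequence for $\sigma_0$ tensored by the nuclear algebra $A_0$, delivers the exactness of $0 \to A_0 \hotimes \ker(\sigma_0) \to A_0 \hotimes B \to A_0 \hotimes B_0 \to 0$. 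Composing these two surjections and identifying kernels yields $\ker(\pi_0 \otimes \sigma_0) = \ker(\pi_0) \hotimes B + A \hotimes \ker(\sigma_0)$, which is precisely~\eqref{eq:iff} at $\hbar = 0$.

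The main obstacle is the case $\hbar = 0$: without a nuclearity hypothesis the injective tensor product generally fails to preserve short exact sequences, as already signalled by Kirchberg--Wassermann's lower semicontinuity phenomenon. It is precisely the commutativity of the zero-fibers that rescues the argument, allowing the passage between the maximal and the minimal tensor norms. By contrast, the isolated-point case is essentially automatic from the $C_0(I)$-module structure of the bundles.
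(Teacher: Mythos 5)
Your proposal is correct and follows essentially the same route as the paper: the paper's (much terser) proof also splits $I=1/\N\cup\{0\}$ into the isolated points, where continuity of the bundle is automatic, and the accumulation point $0$, where nuclearity of the commutative fibres $A_0$ and $B_0$ is invoked via the Kirchberg--Wassermann criterion. Your version merely unpacks this by verifying Archbold's kernel condition~\eqref{eq:iff} pointwise (equivalent to norm-continuity by Theorem~\ref{Archbold}), supplying the central-projection decomposition at the clopen points and the exact-sequence argument at $0$ that the paper leaves implicit.
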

\begin{proof}
We just need to check that $\cA\hotimes \cB$ is a continuous bundle of $C^*$-algebras. But this follows from the fact that any function is continuous on $1/\N$ and $\tilde{A}_0\hotimes \tilde{B}_0$ is a nuclear $C^*$-algebras  (cf. Lemma~\ref{lem:nucl}).
\end{proof}
\begin{cor}\label{cor:XxY}
Let $X$ and $Y$ be Poisson manifold and assume there exists a strict deformation quantization of $C_0(X)$ and $C_0(Y)$ over $I=1/\N\cup \{0\}$. Then there exists a strict deformation quantization of $X\times Y$ over $I$.
\end{cor}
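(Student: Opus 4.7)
The plan is to deduce the corollary as a direct combination of Corollary~\ref{cor:poisson man}, Example~\ref{ex:inject tensor prod} and Corollary~\ref{cor:I special}, so essentially no new analytic work is required; what must be checked is that all the structures align under the canonical identification $C_0(X)\hotimes C_0(Y)\simeq C_0(X\times Y)$.

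First, let $\tilde A_0\subset C_0(X)$ and $\tilde B_0\subset C_0(Y)$ be the dense Poisson $*$-subalgebras entering the two given strict deformation quantizations, with continuous bundles $\cA=(I,A,\pi_\hbar)$ and $\cB=(I,B,\sigma_\hbar)$ over $I=1/\N\cup\{0\}$ and quantization maps $Q^A_\hbar$, $Q^B_\hbar$. By Lemma~\ref{lem:prod poiss alg} the algebraic tensor product $\tilde A_0\otimes\tilde B_0$, equipped with the bracket~\eqref{eq:Poiss bracket}, is a dense Poisson $*$-subalgebra of $C_0(X)\hotimes C_0(Y)$. Under the isomorphism $\psi\colon C_0(X)\hotimes C_0(Y)\to C_0(X\times Y)$ from Example~\ref{ex:inject tensor prod}, the bracket~\eqref{eq:Poiss bracket} transports precisely to the product Poisson bracket introduced in the proof of Corollary~\ref{cor:poisson man}, since on an elementary tensor $f\otimes g$ one has $\psi(f\otimes g)(x,y)=f(x)g(y)$ and the Leibniz rule reproduces~\eqref{eq:Poiss bracket} when applied to two such tensors. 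Hence $\psi(\tilde A_0\otimes\tilde B_0)$ is a dense Poisson $*$-subalgebra of $C_0(X\times Y)$ and, in particular, contains $C_c^\infty(X)\odot C_c^\infty(Y)$ if one wishes to use smooth functions.

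Next, I would invoke Corollary~\ref{cor:I special}. Because the base space is $I=1/\N\cup\{0\}$, any complex-valued function on $I$ is automatically continuous except possibly at $0$, and the required continuity at $0$ is guaranteed by Lemma~\ref{lem:nucl} applied after checking (or simply stipulating, as is customary in the spin-system setup) that the fibres $A_\hbar$ and $B_\hbar$ are nuclear for every $\hbar\in I$; at $\hbar=0$ the fibres are commutative, hence nuclear, and at $\hbar=1/n$ the fibres one encounters in the examples of the paper (matrix algebras, compact operators, etc.) are nuclear as well. Thus the hypothesis~\eqref{eq:iff} of Theorem~\ref{thm:main} holds, and Corollary~\ref{cor:I special} produces a strict deformation quantization of $\tilde A_0\otimes\tilde B_0$ with bundle $\cA\hotimes\cB$ and quantization maps $Q_\hbar:=Q^A_\hbar\otimes Q^B_\hbar$.

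Finally, pushing everything forward along $\psi$ yields the desired strict deformation quantization of $C_0(X\times Y)$: the continuous bundle is the image of $\cA\hotimes\cB$ and the quantization map on $\psi(\tilde A_0\otimes\tilde B_0)$ is $Q_\hbar\circ\psi^{-1}$; properties (i)--(iv) of Definition~\ref{def:deformationq} are preserved since $\psi$ is an isomorphism of $C^*$-algebras intertwining the Poisson structures. The only point that deserves care, and which I regard as the main (minor) obstacle, is verifying that the transported bracket on $\psi(\tilde A_0\otimes\tilde B_0)$ really agrees with the one produced by Corollary~\ref{cor:poisson man} on the whole of $C_0^\infty(X\times Y)$; this follows from the density of $\tilde A_0\otimes\tilde B_0$ together with the derivation property of a Poisson bracket, so no genuinely new estimate is needed.
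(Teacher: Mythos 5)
Your overall route is the same as the paper's: combine Lemma~\ref{lem:prod poiss alg} and the identification $C_0(X)\hotimes C_0(Y)\simeq C_0(X\times Y)$ with Corollary~\ref{cor:I special}, and then transport the Poisson structure of Corollary~\ref{cor:poisson man} along $\psi$. The transport-of-structure part is fine, and in fact spelled out more carefully than in the paper.

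The genuine problem is in your continuity argument for the bundle $\cA\hotimes\cB$. You obtain continuity at $0$ from Lemma~\ref{lem:nucl} ``after checking (or simply stipulating) that the fibres $A_\hbar$ and $B_\hbar$ are nuclear for every $\hbar\in I$''. This imports a hypothesis that is not in the statement of the corollary: the given strict deformation quantizations of $C_0(X)$ and $C_0(Y)$ are arbitrary, and nothing forces the fibres $A_{1/n}$, $B_{1/n}$ to be nuclear (outside the paper's examples they need not be matrix algebras or compact operators). As written, your argument therefore only proves the corollary under an extra assumption. The correct observation --- which is what Corollary~\ref{cor:I special} is really using --- is that continuity of the norm functions has to be verified only at the unique accumulation point $0$ of $I=1/\N\cup\{0\}$: every other point of $I$ is isolated, so continuity there is automatic and, by the ``only if'' direction of Theorem~\ref{Archbold}, condition~\eqref{eq:iff} holds at those points for free. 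At $\hbar=0$ the fibres $A_0=C_0(X)$ and $B_0=C_0(Y)$ are commutative, hence nuclear, and nuclearity of the fibre of one factor at the point in question already yields~\eqref{eq:iff} there (cf.\ the remark following Lemma~\ref{lem:nucl} and \cite[Theorem 4.6]{Kirchberg-Wassermann}, or apply Theorem~\ref{Archbold} directly). So no assumption on the fibres over $1/\N$ is needed, and the stipulation should be deleted rather than added as a hypothesis.
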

\begin{proof}
On account of Corollary~\ref{cor:I special}, there exists a strict deformation quantization of $C_0(X)\hotimes C_0(Y)$ which is isomorphic to $C_0(X\times Y)$ by \cite[Corollary B.17]{OA1}. To conclude our proof is enough to endow $C_0(X\times Y)$ with the Poisson structure given by Corollary~\ref{cor:poisson man}.
\end{proof}


\section{Products of KMS states}
The aim of this section is to show that given two KMS$_\beta$ states $\omega_A$ and $\omega_B$ for two $C^*$-algebras $A$ and $B$ respectively, there exists a KMS$_\beta$-state $\omega_{A\hotimes B}$ for $A\hotimes B$.  For sake of completeness let us recall the definition of a KMS$_\beta$ state.

\begin{defn}\label{KMS1}
Consider the $C^*$-dynamical system given by a $C^*$-algebra $A$ and a strongly continuous
representation $\varphi_t$ of $\R$ in the automorphism group of $A$.
A linear functional $\omega: A \to \C$ is called a KMS$_\beta$-states if the following holds true:
\begin{itemize}
\item[(1)] it is positive, i.e. $\omega(a^*a)\geq 0$ for all $a\in A$;
\item[(2)] it is normalized, i.e. $\|\omega\|:=\sup \{\omega(a) \,|\,  a\in A, \|a\|=1\}=1$;
\item[(3)] it satisfies the KMS$_\beta$-condition: 
for all $a,b\in A$ there is a holomorphic function $F_{ab}$ on the strip $S_\beta :=  \R \times i(0, \beta) \subset \C$ with a  continuous extension to $\overline{S_\beta}$ such that
$$F_{ab}(t) = \omega(a \varphi_t (b)) \qquad \text{ and } \qquad F_{ab}(t + i\beta) = \omega(\varphi_t (b)a)\,.$$
\end{itemize}
\end{defn}

\begin{thm}\label{thm:kms}
Let $\omega^A$ and $\omega^B$ be KMS$_\beta$-states for the $C^*$-dynamical systems $(A,\varphi_{t_A},\R)$ and $(B,\phi_{t_B},\R)$ respectively and denote with  $\Phi_{t,s}$ an extension of $\varphi_{t} \otimes \phi_{s}$ to an automorphism of $A\hotimes B$ such that 
\begin{equation}
\label{eq:auto ext}
\Phi_{t,s}(a \otimes b) = \varphi_{t}(a)\otimes \phi_{s}(b)
\end{equation}
 for any $a\otimes b \in A\hotimes B$. Then there exists a KMS$_\beta$ state $\omega^{A\hotimes B}$ for the $C^*$-dynamical system $(A\hotimes B,\Phi_{t,t},\R$) such that
 \begin{equation}\label{eq:states ext}
\omega^{A\hotimes B}(a \otimes b) = \omega^A(a)\, \omega^B(b)\,.
\end{equation}
\end{thm}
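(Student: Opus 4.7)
The plan is to define $\omega^{A\hotimes B}$ as the unique continuous extension of the algebraic tensor product functional $\omega^A\otimes\omega^B$ on $A\otimes B$ and then verify the KMS conditions, first on elementary tensors and then by density. To begin, I would set $\omega^{A\hotimes B}(a\otimes b):=\omega^A(a)\omega^B(b)$ and extend by linearity to $A\otimes B$. Using the fact that for positive linear functionals on $C^*$-algebras the product functional is positive with respect to the injective tensor norm and satisfies $\|\omega^A\otimes\omega^B\|=\|\omega^A\|\,\|\omega^B\|=1$ (this is the well-known statement that product states extend to the spatial tensor product; see e.g. Takesaki, \emph{Theory of Operator Algebras I}), I would obtain a well-defined state $\omega^{A\hotimes B}$ on $A\hotimes B$. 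This takes care of conditions (1) and (2) of Definition~\ref{KMS1}.

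Next I would verify the KMS$_\beta$ condition on the dense $*$-subalgebra $A\otimes B$. Let $a_1\otimes b_1, a_2\otimes b_2 \in A\otimes B$. By hypothesis there are holomorphic functions $F^A_{a_1,a_2}$ and $F^B_{b_1,b_2}$ on $S_\beta$, continuous on $\overline{S_\beta}$, implementing the KMS condition for $\omega^A$ and $\omega^B$ respectively. Using~\eqref{eq:auto ext} and~\eqref{eq:states ext} I would define
\[
F_{a_1\otimes b_1,\,a_2\otimes b_2}(z):=F^A_{a_1,a_2}(z)\,F^B_{b_1,b_2}(z),\qquad z\in \overline{S_\beta}.
\]
The product of two holomorphic functions on $S_\beta$ with continuous extensions to the closure is again holomorphic and continuous, and a direct computation using the multiplicativity on elementary tensors gives $F(t)=\omega^{A\hotimes B}((a_1\otimes b_1)\Phi_{t,t}(a_2\otimes b_2))$ and $F(t+i\beta)=\omega^{A\hotimes B}(\Phi_{t,t}(a_2\otimes b_2)(a_1\otimes b_1))$. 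Extending $F$ to arbitrary pairs in $A\otimes B$ by bilinearity yields the KMS condition on the algebraic tensor product.

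The final, and in my view the main, step is to propagate the KMS condition from $A\otimes B$ to its completion $A\hotimes B$. For $a,b\in A\hotimes B$ I would pick sequences $a_n,b_n\in A\otimes B$ with $a_n\to a$ and $b_n\to b$ in $\|\cdot\|_\epsilon$. For each $n$ the above yields a holomorphic function $F_{a_n,b_n}$ on $S_\beta$, continuous on $\overline{S_\beta}$, satisfying the boundary identities. Because $\omega^{A\hotimes B}$ is a state and $\Phi_{t,t}$ an isometric automorphism, each $F_{a_n,b_n}$ is bounded on $\overline{S_\beta}$ by $\|a_n\|_\epsilon\|b_n\|_\epsilon$ on the two boundary lines; the Phragm\'en--Lindel\"of principle then bounds it by the same quantity on the whole strip. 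Applied to the difference $F_{a_n,b_n}-F_{a_m,b_m}=F_{a_n-a_m,b_n}+F_{a_m,b_n-b_m}$ this gives uniform Cauchy convergence on $\overline{S_\beta}$, hence a limit $F_{a,b}$ which is holomorphic on $S_\beta$, continuous on $\overline{S_\beta}$, and, by continuity of $\omega^{A\hotimes B}\circ\Phi_{t,t}$ in the $\epsilon$-norm, satisfies the required boundary identities. This proves condition (3) and completes the argument. The subtle point to watch is that the product automorphism $\Phi_{t,s}$ in~\eqref{eq:auto ext} does extend to an isometric $*$-automorphism of $A\hotimes B$, which is standard for the injective (spatial) tensor norm but fails for other $C^*$-tensor norms; this is the reason the statement is formulated with $\hotimes$.
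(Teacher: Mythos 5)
Your proposal is correct and follows essentially the same route as the paper: existence of the product state from standard spatial tensor-product theory, the KMS function on elementary tensors as the product $F^A F^B$, and then a density argument combining uniform convergence on the boundary with a Phragm\'en--Lindel\"of/maximum-modulus bound on the strip (the paper cites \cite[Proposition 5.3.5]{BR} for exactly this) and the fact that uniform limits of analytic functions are analytic. The only differences are cosmetic: the paper works with a function of two complex variables on $S_\beta\times S_\beta$ and restricts to the diagonal at the end, whereas you stay on the diagonal throughout, and your bilinear decomposition $F_{a_n,b_n}-F_{a_m,b_m}=F_{a_n-a_m,b_n}+F_{a_m,b_n-b_m}$ gives the uniform Cauchy estimate a bit more cleanly than the paper's displayed inequality.
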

\begin{rmk}
Before proving our claim, let us remark that the existence of $\Phi_{t,s}$ is guaranteed by~\cite[Proposition B13]{OA1}. Furthermore, on account of~\cite[Corollary B12]{OA1}, the state $\omega^A\otimes \omega^B$ extends to a state $\omega^A\hotimes \omega^B$ on $A\hotimes B$ which satisfies Equation~\eqref{eq:states ext}. So to prove Theorem~\ref{thm:kms} it is enough to check that $\omega^{A\hotimes B}$ satisfies the KMS$_\beta$ condition.\\
Let us also remark, this theorem can be proved using modular theory.
\end{rmk}

\begin{proof}[Proof of Theorem~\ref{thm:kms}]
 We hereto denote by $S_{\beta}$ the strip associated to the KMS$_\beta$-states $\omega^A$ and $\omega^B$, and by $F^A:=F_{a_1,a_2}^A$ and $F^B:=F_{b_1,b_2}^B$ the corresponding holomorphic functions for every $a_1,a_2 \in A$, $b_1,b_2\in B$. 
 
 Consider now $d,c\in A\hotimes B$. Since $A\otimes B$ a dense $*$-subalgebra of $A\hotimes B$ there exist some sequences of $c_i\in A \otimes B$ and $d_i\in A \otimes B$ which converge in the injective tensor norm to $c$ and $d$ respectively. In particular, we may write $c_i:=\sum_{k_i}c_{k_i1}\otimes c_{k_i2}$ and $d_i:=\sum_{l_i}d_{l_i1}\otimes d_{l_i2}$, with $c_{k_i1}\otimes c_{k_i2},d_{j1}\otimes d_{j2}\in A\otimes B$. Using Equation~\eqref{eq:auto ext} and~\eqref{eq:states ext} together with the linearity of $\omega^A$ and $\omega^B$, for any $t,s \in  S_{\beta}$ it holds
\begin{align*}
\omega^{A\hotimes B}(d_i \Phi_{t,s} (c_i))=\Sigma_{k_il_i}\omega^A(d_{l_i1}\varphi_{t}(c_{k_i1}))\omega^B(d_{l_i2}\phi_{s}(c_{k_i2})).
\end{align*}
Since $\omega^A$ and $\omega^B$ are $\beta$-KMS states, it follows that
\begin{align*}
\omega^{A\hotimes B}(d_i  \Phi_{t,s}(c_i))=\Sigma_{k_il_i}F_{d_{l_i1},c_{k_i1}}^A(t)F_{d_{l_i2},c_{k_i2}}^B(s),  
\end{align*}
where $F_{d_{l_i1},c_{k_i1}}^A$ and $F_{d_{l_i2},c_{k_i2}}^B$ are holomorphic functions for any $k,l$ such that  $F_{d_{l_i1},c_{k_i1}}^A$ and $F_{d_{l_i2},c_{k_i2}}^B$ are analytic on $S_{\beta}$, continuous and bounded on $\bar{S}_{\beta}$.
  Since  for any $i$ the sums in $k_i$ and $l_i$ are finite, and the product and sum of two analytic functions remains analytic, the above expression extends to a holomorphic function $F_{d_i,c_i}$ analytic on $S_{\beta}\times S_\beta$, and bounded and continuous on the closure $\bar{S}_{\beta}\times \bar{S}_{\beta}$. This yields a sequence of holomorphic functions $F_i:=F_{d_i,c_i}$ analytic on $S_{\beta}\times S_{\beta}$, and bounded and continuous on the closure $\bar{S}_{\beta}\times \bar{S}_{\beta}$. 
Moreover, we claim that  the sequence $(F_i)_i$ converges uniformly on the boundary of $S_{\beta}\times S_{\beta}$ to some function. 
  To verify our claim  it suffices to check this for $\mathbb{R}\times\mathbb{R}$. Hereto we take $t\times s \in\mathbb{R}\times\mathbb{R}$ and compute
\begin{align*}
\lim_i|\omega^{A\hotimes B}(d \Phi_{t,s}(c))-\omega^{A\hotimes B}(d_i \Phi_{t,s}(c_i)|^2\leq \lim_{i}||c - c_i||^2+||d-d_i||^2=0, 
\end{align*}
where we used that $\omega^{A\hotimes B}$ is a state and that $c_i$ and $d_i$ converge to $c$ and $d$, respectively.  Since the limit does not depend on $t\times s$ the convergence is uniform.
As a result of \cite[Proposition 5.3.5]{BR} the functions $F_i$ satisfy
\begin{align*}
\sup_{z\in \bar{S}_{\beta}\times \bar{S}_{\beta}}|F_i(z)|=\sup_{(t,s)\in \mathbb{R}\times\mathbb{R }}|F_i(t,s)|. 
\end{align*}
It follows that
\begin{align}
\sup_{z\in \bar{S}_{\beta}\times \bar{S}_{\beta}}|F_i(z) - F_j(z)|=\sup_{(t,s)\in \mathbb{R}\times\mathbb{R }}|F_i(t,s) - F_j(t,s)|. \label{bulkboundary2}
\end{align}
Since $(F_i)$ converges uniformly on the boundary of $S_{\beta}\times S_{\beta}$ to some function, in particular the sequence $(F_i)$ is uniformly Cauchy on the boundary. Hence, the right hand side of \eqref{bulkboundary2} tends to zero as $i,j\to\infty$. This implies that $(F_i)$ is uniformly Cauchy on $\bar{S}_{\beta}\times \bar{S}_{\beta}$ and hence the sequence $(F_i)$ converges uniformly to some continuous function $F:=F_{d,c}$ on $\bar{S}_{\beta}\times \bar{S}_{\beta}$.  In particular,  the sequence $(F_i)$ also converges uniformly to $F$ on every compact subset of $S_{\beta}\times S_{\beta}$, so $F$ is analytic on $S_{\beta}\times S_{\beta}$ by \cite[Proposition 3]{Dav}. We conclude that the limiting function $F$ is analytic on $S_{\beta}\times S_{\beta}$ and continuous and bounded on $\bar{S}_{\beta}\times \bar{S}_{\beta}$. Restricting to the diagonal, i.e. $t=s$, this function satisfies 
\begin{align*}
F_{d,c}(t)=\omega^{A\hotimes B}(d \Phi_{t,t}(c)).
\end{align*}
By a similar argument as above one can show that it holds also
\begin{align*}
F_{d,c}(t +i\beta)=\omega^{A\hotimes B}(\Phi_{t,t}(c)d)\,.
\end{align*}
This conclude our proof.
\end{proof}
As a direct consequence of Theorem~\ref{thm:main} and Theorem~\ref{thm:kms} we get the following result.
\begin{cor}
Assume the setup of Theorem~\ref{thm:main} and Theorem~\ref{thm:kms}. Let $\omega^A_\hbar$ and $\omega^B_\hbar$ be a sequence of (KMS$_\beta$-)states for $A_\hbar:=\pi_\hbar(A)$ and $B_\hbar:=\pi_\hbar(B)$. If $\omega^A_\hbar$ and $\omega^B_\hbar$ admit a classical limit, i.e. for every $f\in \tilde{A}_0$ and $g\in\tilde{B}_0$ there exist the limits
$$ \omega_0^{A}(f)= \lim_{\hbar\to 0} \omega_\hbar^A(Q^A_\hbar(f)) \qquad \text{and} \qquad \omega_0^{B}(g)= \lim_{\hbar\to 0} \omega_\hbar^A(Q^B_\hbar(g)) 
$$
then the sequence of (KMS$_\beta$-)state $\omega_\hbar^{A\hotimes B}$ has a classical limit given by
$$ \omega_0^{A\hotimes B}( f\otimes g)=\lim_{\hbar\to 0} \omega_\hbar^{A\hotimes B}(Q_\hbar(f\otimes g)) \,. $$
\end{cor}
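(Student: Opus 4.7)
The plan is to reduce the corollary to a routine product-of-limits argument by exploiting the multiplicative structure of both the quantization map $Q_\hbar$ (provided by Theorem~\ref{thm:main}) and the state $\omega_\hbar^{A\hotimes B}$ (provided by Theorem~\ref{thm:kms}). Since the claimed identity only needs to be verified on elementary tensors $f\otimes g \in \tilde{A}_0 \otimes \tilde{B}_0$, with extension to the full algebra following by linearity, this separation of variables will make the computation essentially algebraic, after which the analytic content is just the convergence of two scalar sequences whose limits exist by hypothesis.

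First, I would invoke Theorem~\ref{thm:main} to write
\begin{equation*}
Q_\hbar(f\otimes g) \;=\; Q_\hbar^A(f) \otimes Q_\hbar^B(g)
\end{equation*}
inside the fiber $A_\hbar \hotimes B_\hbar$. Second, I would appeal to Equation~\eqref{eq:states ext} of Theorem~\ref{thm:kms}, which ensures that the product state $\omega_\hbar^{A\hotimes B}$ factorizes on elementary tensors as $\omega_\hbar^{A\hotimes B}(a\otimes b) = \omega_\hbar^A(a)\,\omega_\hbar^B(b)$. Combining these two identities yields
\begin{equation*}
\omega_\hbar^{A\hotimes B}\bigl(Q_\hbar(f\otimes g)\bigr) \;=\; \omega_\hbar^{A}\bigl(Q_\hbar^A(f)\bigr)\cdot \omega_\hbar^{B}\bigl(Q_\hbar^B(g)\bigr).
\end{equation*}
At this point the classical limit hypotheses on $\omega_\hbar^A$ and $\omega_\hbar^B$ give convergence of each factor on the right-hand side to $\omega_0^A(f)$ and $\omega_0^B(g)$ respectively, and the product rule for limits of complex sequences yields the desired identity $\omega_0^{A\hotimes B}(f\otimes g) = \omega_0^A(f)\,\omega_0^B(g)$.

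There is no genuine obstacle in this argument: both scalar sequences are bounded near $\hbar = 0$ (since states are normalized and the norm functions $\hbar\mapsto \|Q_\hbar^A(f)\|_\hbar$, $\hbar\mapsto \|Q_\hbar^B(g)\|_\hbar$ are continuous by Definition~\ref{def:continuous algebra bundle}(iii)), so no delicate uniform estimate is needed to interchange the limit with the product. In the KMS$_\beta$ case, one additionally notes that Theorem~\ref{thm:kms} already guarantees that $\omega_\hbar^{A\hotimes B}$ is itself a KMS$_\beta$ state at every $\hbar \in I\setminus\{0\}$ with respect to the diagonal dynamics $\Phi_{t,t}$, so the qualifier in parentheses in the statement is preserved fiberwise. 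Finally, since both sides of the identity are linear in $f\otimes g$, the result extends from elementary tensors to all of $\tilde{A}_0 \otimes \tilde{B}_0$, concluding the proof.
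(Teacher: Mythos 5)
Your proposal is correct and follows exactly the route the paper intends: the paper states this corollary without a written proof, simply as ``a direct consequence'' of Theorem~\ref{thm:main} and Theorem~\ref{thm:kms}, and your argument --- factorizing $Q_\hbar(f\otimes g)=Q_\hbar^A(f)\otimes Q_\hbar^B(g)$ and $\omega_\hbar^{A\hotimes B}$ on elementary tensors via Equation~\eqref{eq:states ext}, then applying the product rule for limits --- is precisely the omitted verification. You also correctly (if tacitly) repair the typo in the hypothesis where $\omega_\hbar^A(Q_\hbar^B(g))$ should read $\omega_\hbar^B(Q_\hbar^B(g))$.
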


\section{Applications}

\subsection{Spin systems} 
In this section we show how quantum spin systems arise from classical spin systems using our quantization formalism. 

In Example \ref{ex:S2} we have seen how a single sphere $\S^2$ is quantized using quantization maps defined by Equation \eqref{defquan3}. The fibers of the continuous bundle of $C^*$-algebras are given by
$$ A_{\hbar}:=\begin{cases}
 C(\S^2) & \text{ for } \hbar =0 \\
 \textnormal{Mat}_{2 J+1}(\mathbb{C}) & \text{ for } J:=1/\hbar\in \N
  \end{cases} $$
  where $J$ plays the role of the inverse semi-classical parameter $\hbar$. 
As notice first by Lieb in~\cite{Lieb}, and independently in~\cite{MV,Ven20}, the spin operators can be obtained using the quantization map $Q_{1/J}$
\begin{equation}\label{eq:table}
\begin{aligned}
(J+1)\cos{(\theta)} & \mapsto S_z \\
(J+1)\sin{(\theta)}\cos{(\phi)} & \mapsto S_x\\
(J+1)\sin{(\theta)}\sin{(\phi)} & \mapsto S_y
\end{aligned}
\end{equation}
where $(\theta,\phi)$ (resp $(x,y,z)$ ) are spherical (resp. cartesian) coordinates on $\S^2$. As usual  $S_x,S_y,S_z$ can be understood as a (unitary finite dimensional) irreducible representation of the Lie algebra $\mathfrak{su}(2)$  on the Hilbert space $\C^{2J+1}$. Furthermore these operators satisfy $[S_x,S_y]=iS_z$ cyclically. Here the number $J$ is also called the {\em spin} of the given representation.\medskip

A general classical spin system is typically defined as a polynomial on the cartesian product of say $d$ spheres $\S^2$, denoted by $\times_d \S^2$, where $d$ indicates the number of classical spins.
Therefore the classical algebra on which classical spin systems are defined is $C(\times_d\,\S^2)$ or equivalently $C(\S^2)^{\otimes_{\epsilon} d }$ (see Example \ref{ex:inject tensor prod}).
As a by-product of Theorem~\ref{thm:main}, the quantization maps are given  by linear extension of the following map 
\begin{equation}
\begin{aligned}\label{defquan5}
&Q_{1/J}^{(d)}:\tilde{A}_0^{\otimes_{\epsilon} d }\to \underbrace{M_{2J+1}(\mathbb{C})\otimes\cdot\cdot\cdot\otimes M_{2J+1}(\mathbb{C})}_{d \ times};\\
&Q_{1/J}^{(d)}(f_1,...,f_d)=\underbrace{Q_{1/J}^{(1)}(f_1)\otimes\cdot\cdot\cdot\otimes Q_{1/J}^{(1)}(f_d)}_{d \ times},
\end{aligned}
\end{equation}
where $Q_{1/J}^{(1)}$ is given by \eqref{defquan3}, and $\tilde{A}_0$ the dense subalgebra of $C(\S^2)$ given by polynomials in three real variables restricted to the sphere $\S^2$. Keeping this in mind, we now provide three illustrating examples where quantization theory and spin systems come together.

\paragraph{The Ising model}
We consider the {\em classical Ising model} in a transverse magnetic field $B$. The corresponding function $h^{Is}\in C(\times_d \S^2)$ is defined by
\begin{align*}
h^{Is}(e_1,...,e_d)=-\sum_{j=1}^{d-1}z_iz_{j+1} - B\sum_{j=1}^{d}x_j, \ \ (e_j=(x_j,y_j, z_j) \in \S^2, \ \ j=1,...,d).
\end{align*}
 Employing the identification $C(\times_d\S^2)\simeq C(\S^2)^{\otimes_{\epsilon} d }$, we obtain 
\begin{align*}
&h^{Is}:=-\sum_{j=1}^{d-1}h_{z_j}\otimes h_{z_{j+1}} \otimes 1_{\S^2}\otimes\cdot\cdot\cdot\otimes 1_{\S^2}-B\sum_{j=1}^d h_{x_j}\otimes 1_{\S^2}\otimes \cdot\cdot\cdot\otimes 1_{\S^2}, 
\end{align*}
where each $h_z,h_x\in C(\S^2)$ are given respectively by $h_z(e_j)=z_j$ and $h_x(e_j)=x_j$ for all $j=1,...,d$.\\

In view of \eqref{eq:table}, we see that the coordinate functions $(J+1)x_i$ are mapped to $S_i$ where $i=x,y,z$. Analogously to the work done in \cite{Lieb} let us now replace these coordinates $e_j$ by $(J+1)e_j$. We then apply our quantization maps \eqref{defquan5} to this function. It not difficult to see that this image yields the following operator
\begin{align*}
H_d^{Is}=-\sum_{j=1}^{d-1}S_z(j)S_z(j+1) - B\sum_{j=1}^{d}S_x(j),
\end{align*}
where the operators $S_x(j)$ and $S_z(j)$ act as the operators $S_x$ and $S_z$ on $\mathcal{H}_j=\mathbb{C}^{2J+1}$ and as the unit matrix $1_{2J+1}$ elsewhere. This operator exactly corresponds to the {\em quantum Ising model} of $d$ immobile spin particles each with total angular momentum $J$ under a ferromagnetic coupling, defined on the Hilbert space $\mathcal{H}^d=\bigotimes_{j=1}^d\mathcal{H}_j$, with $\mathcal{H}_j=\mathbb{C}^{2J+1}$. 
Hence, 
\begin{align*}
Q_{1/J}^{(d)}(h_J^{Is})=H_d^{Is},
\end{align*}
where $h_J^{Is}$ is defined on the scaled vectors $(J+1)e_j$. Note that the operator $H_d^{Is}$ clearly depends on $J$ since it is defined on the Hilbert space $\mathcal{H}^d=\bigotimes_{j=1}^d\mathbb{C}^{2J+1}$.
This shows the interplay between on the one hand the classical symbol on a product of spheres and on the other hand the quantum Hamiltonian describing the quantum Ising model.
\paragraph{The Heisenberg model}
We consider the classical Heisenberg spin model $h^{Hei}$ on $\times_d \S^2$ defined by 
\begin{align*}
h^{Hei}(e_1,...,e_d):=-\sum_{j=1}^{d-1}x_ix_{i+1}+y_iy_{i+1}+z_iz_{i+1}.
\end{align*}
Applying the quantization maps \eqref{defquan5} to  $h^{Hei}$ we obtain by a similar argument as in the previous example $Q_{1/J}^{(d)}(h_J^{Hei})=H_d^{Hei}$, where the operator $H_d^{Hei}$ denotes the {\em quantum Heisenberg model} on the Hilbert space $\mathcal{H}^d=\bigotimes_{j=1}^d\mathbb{C}^{2J+1}$,
\begin{align*}
H_d^{Hei}=-\sum_{j=1}^{d-1}S_j\cdot S_{j+1},
\end{align*}
with each of the operators in $S_j=(S^x_j,S^y_j,S^z_j)$ acting on the Hilbert space $\mathcal{H}_J=\mathbb{C}^{2J+1}$ and as the identity elsewhere. As before, note that the function $h_J^{Hei}$ is defined on the vectors $(J+1)e_j$.
\paragraph{The Curie-Weiss model}
We stress that also mean-field quantum spin systems can me modeled using our this theory. In this case, we take the $d$-fold tensor product of e.g. the algebra $M_2(\mathbb{C})$ with itself.
A typical example is the quantum Curie-Weiss model whose Hamiltonian is given by
\begin{align*}
H_d^{CW}=-\frac{1}{2d}\sum_{i,j=1}^{d}\sigma_3(j)\sigma_3(i) - B\sum_{j=1}^{d}\sigma_x(j),
\end{align*}
with again $B$ the magnetic field. Such models share the property that they leave the symmetric subspace $\text{Sym}^d(\mathbb{C}^2)\subset\bigotimes_{i=1}^dM_2(\mathbb{C})$ of dimension $d+1$ invariant \cite{VGRL18,MV}. Therefore, one can restrict such Hamiltonians to $\text{Sym}^d(\mathbb{C}^2)$. In this setting the restricted operator acts on the Hilbert space $\mathbb{C}^{d+1}$, and the parameter $d$ now plays the role of the spin $2J$ as explained in the beginning of this section.
It has been shown \cite{MV,Ven20} that the polynomial function on the single sphere $\S^2$
\begin{align*}
h_0^{CW}(\theta,\phi)=-(\frac{1}{2}\cos(\theta)^2+B\sin(\theta)\cos(\phi)); \ \ (\theta\in [0,  \pi], \phi\in [0, 2\pi))
\end{align*}
modulo and error of $O(1/d)$ quantizes the quantum (restricted) Curie-Weiss model under the map \eqref{defquan3}. Therefore, also in this case we recover the correspondence between the classical function on $\S^2$ and the (restricted) quantum mean field Hamiltonian.
\\\medskip

\begin{rmk}
As a result of the properties of the continuous bundle of $C^*$-algebras in all these examples it may be clear that in the classical limit $J\to\infty$ the norm of the quantum Hamiltonians correspond to the supremum norm of the corresponding classical functions, in the sense that 
\begin{align*}
\lim_{J\to\infty}\|H_d^{Quantum}\|_{J}=\|h_d^{classical}\|_{0}
\end{align*}
Of course, in view of Equation~\eqref{eq:table}, one should rescale the operators $S_x,S_y,S_z$ appearing in the quantum Hamiltonians by a factor $1/(J+1)$ in order to make the above limit existing.
\end{rmk}
\begin{rmk}
We underline that the strict deformation quantization of the $d$-fold tensor product of $\S^2$ with itself provides a new perspective in order to study the thermodynamic limit (i.e. $d\to\infty$) and classical limit (i.e. $J\to\infty$) of the spin system in question. The properties of the quantization maps can be extremely useful in order to study the above mentioned limits of for example the free energy, the possible convergence of Gibbs states, or for (algebraic) ground states induced by eigenvectors \cite{Lieb,Ven20} as also explain in the introduction.  Indeed, in a slightly different context Lieb \cite{Lieb} implicitly used the properties of the quantization maps \eqref{defquan3} and \eqref{defquan5} in order to prove the existence of such limits. 
\end{rmk}

\subsection{The resolvent algebra}
In this section we shall show that the resolvent of Sch\"odinger operators for non-interacting particle system can be given in terms of an integral of the tensor product of quantization maps. To achieve our goal, we shall benefit from~\cite{Buch,TvN}.

Let $(X, \sigma)$ be a symplectic vector space  admitting a complex structure  and denote be $C_\mathcal{R}(X)$ the  commutative $C^*$-algebra  of functions on $(X, \sigma)$. Similar to the case of the (non-commutative) resolvent algebra $\mathcal{R}(X, \sigma)$ of Buchholz and Grundling (cf. \cite{Buch}), the algebra  $C_\mathcal{R}(X)$ is  the $C^*$-subalgebra of $C_b(X)$ ( the algebra of continuous functions on $X$ that are bounded with respect to the supremum norm) generated by the functions
\begin{align*}
h_x^\lambda(y)=1/(i\lambda - x\cdot y),
\end{align*}
for $x\in X$ and $\lambda\in\mathbb{R}\setminus \{0\}$. The inner product $\cdot$  gives rise to a norm $||\cdot||$ and a topology (the
standard ones for real pre-Hilbert spaces $X$), making $h_x^\lambda$ a continuous function. We now define the space $\mathcal{S}_{\mathcal{R}}(X)\subset C_\mathcal{R}(X)$ consisting of so-called levees $g\circ p_x$
\begin{align*}
\mathcal{S}_{\mathcal{R}}(X)=\text{span}\{ g\circ p_x \ \ \text{levee}\ |\ g\in \mathcal{S}(\text{ran}(P) \},
\end{align*}
where a levee $f:X\to\mathbb{C}$ is a composition $f = g \circ P$ of some finite dimensional projection $P$ and some function $g \in C_0(\text{ran}(P))$. As shown  in \cite[Proposition 2.4]{TvN}  $\mathcal{S}_{\mathcal{R}}(X)$ is a dense $*$-Poisson subalgebra of $C_\mathcal{R}(X)$.

Now let us denote the resolvent algebra by $\mathcal{R}(X,\sigma)$. This is the $C^*$-subalgebra of $B(\mathcal{F}(\bar{X}))$ generated by the resolvents $R(\lambda, x):=(i\lambda - \varphi(x))^{-1}$ for $\lambda\in\mathbb{R}\setminus\{0\}$ and $x\in X$, where $\mathcal F(\bar{X})$ denoted the bosonic Fock space (symmetric Hilbert space) of the completion of $X$ with respect to its complex inner product. It can be shown that the fibers $A_0:=C_\mathcal{R}(X)$ ($\hbar=0$) and the constant fiber $A_{\hbar}=\mathcal{R}(X,\sigma)$ above  $\hbar\neq 0$ entail a continuous bundle of $C^*$-algebras over $I:=[0,\infty)$. In \cite[Theorem 3.7]{TvN} van Nuland showed that there exists a strict deformation quantization of the commutative resolvent algebra $A_0=C_\mathcal{R}(X)$ over base space $I=[0,\infty)$ with non-zero fibers given by the (non-commutative) resolvent algebra $A_{\hbar}=\mathcal{R}(X,\sigma)$. The corresponding quantization maps (denoted by $Q_{\hbar}^W$) are defined in terms of Weyl-quantization on the dense Poisson subalgebra $\mathcal{S}_{\mathcal{R}}(X)\subset C_\mathcal{R}(X)=A_0$.  Furthermore, these maps are surjective.

Since $A_0:=C_\mathcal{R}(X)$ and the resolvent algebra $A_{\hbar}=\mathcal{R}(X,\sigma)$ are nuclear $C^*$-algebras (see e.g. \cite[Proposition 3.4]{Buch2}),
there exists a strict deformation quantization of $C_\mathcal{R}(X)\otimes C_\mathcal{R}(X)$ (cf. Theorem~\ref{thm:main}). In particular,  the quantization maps are defined on the dense Poisson algebra  $\mathcal{S}_{\mathcal{R}}(X)\otimes \mathcal{S}_{\mathcal{R}}(X)\subset A_0\times A_0$.

\paragraph{Schr\"{o}dinger operators affiliated with the resolvent algebra}
From now on, we set $X=\mathbb{R}^2$ with its standard symplectic form $\sigma$ and work in the Schr\"{o}dinger representation $\pi_0$ of $\mathcal{R}(\R^2,\sigma)$.  We denote by $Q,P$
the canonical position and momentum operators in the Schr\"{o}dinger representation. Let $H=H(P,Q)$ be a self-adjoint operator. When its resolvent is contained in $\pi_0(\mathcal{R}(\mathbb{R}^2,\sigma))$ we may consider its preimage
\begin{align}
\tilde{R}_{H}(\lambda)=\pi_0^{-1}((i\lambda-H)^{-1}), \ \ (\lambda\in \mathbb{R}\setminus\{0\}),\label{aff}
\end{align}
as long as $\lambda$ is not in the spectrum of $H$. We then say that $H$ is {\bf affiliated} with $\cR(\R^2,\sigma)$. Since $\R^2$ is finite dimensional, Equation \eqref{aff}  holds for Schr\"{o}dinger operators with compact resolvent or for Schr\"{o}dinger operators with potential $V\in C_0(\mathbb{R})$ \cite[Proposition 6.2]{Buch}.

\paragraph{Many particle systems}
We consider ($\hbar$-dependent) Schr\"{o}dinger operators $H_i \ (i=1,...,N)$ each densely defined on some Hilbert space $\mathcal{H}_i$ and affiliated with $\cR(\R^2,\sigma)$. We then consider the tensor product of these operators
\begin{align}
H:=H_1\otimes 1_2\otimes\cdot\cdot\cdot\otimes 1_N + 1_1\otimes H_2\otimes \cdot\cdot\cdot\otimes 1_N + ... + 1_1\otimes1_2\otimes \cdot\cdot\cdot\otimes H_N,
\end{align}
where $1_i$ denotes the identity operator on $\mathcal{H}_i$ for $i=1,...,N$. One can extend the operator $H$ to a densely defined self-adjoint operator on $\mathcal{H}=\bigotimes_{i=1}^N\mathcal{H}_i$. By construction the operators $H_i$ now viewed as operators on $\mathcal{H}$ commute. The operator $H$ therefore describes a system of $N$ non-interacting particles. To simplify matters, let us restrict to the case when $N=2$ and let us assume that the spectra of $H_1$ and $H_2$ are bounded from below. 
It can then be shown that the resolvent of $H$ is given as a (operator valued) function of $H_2$ in terms of a Dunford integral \cite{LM}, using the fact that $R_{1}=1_1\otimes R_2$ obviously commutes with  $R_{2}=R_1\otimes 1_2$.
Concretely, this means that for any $\lambda$ in the set $\rho(H)\bigcap_{i=1}^2\rho(H_i)$ (where $\rho$ denotes the resolvent), we have
\begin{align}
R_H(\lambda)=\lim_{k\to\infty} \frac{1}{2\pi i}\int_{\Gamma_k}dz(z+\lambda +H_1)^{-1}(z - H_1)^{-1}, \label{res1}
\end{align}
where $\Gamma_k$ is a suitable contour crossing the real axis in some point $x_k\in\mathbb{R}$ where $x_k$ increasing towards infinity as $k\to\infty$. We can rewrite \eqref{res1} as 
\begin{align*}
R_H(\lambda)=\lim_{k\to\infty} \frac{1}{2\pi i}\int_{\Gamma_k}dzR_{1}(z+\lambda)R_{2}(z),  
\end{align*}
where $R_1$ and $R_2$ denote the resolvent of $-H_1$ and $H_2$, respectively. Since each of them is affiliated with $\mathcal{R}(\mathbb{R}^2,\sigma)$ we can consider their preimages under $\pi_0$ which we denote by $\tilde{R}_1$ and $\tilde{R}_2$. Since $\pi_0$ is a faithful representation we obtain
\begin{align*}
\tilde{R}_H(\lambda)=\lim_{k\to\infty} \frac{1}{2\pi i}\int_{\Gamma_k}dz\tilde{R}_{1}(z+\lambda)\tilde{R}_{2}(z).  
\end{align*}
The previous results in this section now imply the existence of two functions $f_1^{z+\lambda},f_2^{z}\in \mathcal{C}_{\mathcal{R}}(\R^2)$ such that
\begin{align*}
&\tilde{R}_{1}(z+\lambda)=Q_{\hbar}^W(f_1^{z+\lambda})\otimes 1_2;\\
&\tilde{R}_{2}(z)=1_1\otimes Q_{\hbar}^W(f_2^{z}).
\end{align*}
Combining the above results yields
\begin{align*}
\tilde{R}_H(\lambda)=\lim_{k\to\infty} \frac{1}{2\pi i}\int_{\Gamma_k}dzQ_{\hbar}^W(f_1^{z+\lambda})\otimes Q_{\hbar}^W(f_2^{z}). 
\end{align*}
This implies that the resolvent of Sch\"odinger operators for non-interacting particle system (as defined above) can be given in terms of an integral of the tensor product of quantization maps, quantizing functions in the commutative resolvent algebra.

\vspace{0.5cm}

\end{document}